\documentclass{article}
\usepackage{amsfonts,amsthm,amssymb, amsmath}
\usepackage{graphicx} 
\usepackage[margin=1in]{geometry}
\usepackage{xcolor}
\usepackage[hidelinks]{hyperref}
\usepackage{authblk}

\newtheorem{theorem}{Theorem}[section]
\newtheorem{lemma}[theorem]{Lemma}
\newtheorem{proposition}[theorem]{Proposition}
\newtheorem{corollary}[theorem]{Corollary}

\theoremstyle{definition}
\newtheorem{definition}[theorem]{Definition} 
\newtheorem{remark}[theorem]{Remark}
\newtheorem{example}[theorem]{Example}

\newcommand{\F}{{\mathbb{F}}}

\definecolor{ccamps}{rgb}{0.44, 0.0, 1.0}

\definecolor{crod}{RGB}{200, 120, 0}

\title{Transversal gates for quantum CSS codes}

\author[1,2]{Eduardo Camps-Moreno}
\author[2]{Hiram H. L\'opez}
\author[2]{Gretchen L. Matthews}
\author[3]{Narayanan Rengaswamy}
\author[2]{Rodrigo San-Jos\'e}

\affil[1]{Université de Bordeaux, Talence, France \footnote{Email: eduardo.camps-moreno@math.u-bordeaux.fr}}
\affil[2]{Department of Mathematics\\ Virginia Tech\\ Blacksburg, VA USA\footnote{Emails: \{gmatthews, hhlopez, rsanjose\}@vt.edu}}
\affil[3]{Department of Electrical and Computer Engineering\\ University of Arizona\\ Tucson, AZ USA \footnote{Email: narayananr@arizona.edu}}

\begin{document}

\maketitle

\begin{abstract}
In this paper, we focus on the problem of computing the set of diagonal transversal gates fixing a CSS code. We determine the logical actions of the gates as well as the groups of transversal gates that induce non-trivial logical gates and logical identities. We explicitly declare the set of equations defining the groups, a key advantage and differentiator of our approach. We compute the complete set of transversal stabilizers and transversal gates for any CSS code arising from monomial codes, a family that includes decreasing monomial codes and polar codes. As a consequence, we recover and extend some results in the literature on CSS-T codes, triorthogonal codes, and divisible codes.
\end{abstract}

\section{Introduction}

Over the past few decades, quantum computing has attracted increasing attention, mainly due to the existence of quantum algorithms that achieve exponential speed-ups over the best-known classical algorithms, e.g., Shor's algorithm for prime factorization \cite{shorPrimeFactorization}. One of the main challenges for reliable large-scale quantum computing is to develop techniques to mitigate and correct errors arising from the effect of noise and decoherence on qubits. Quantum error-correcting codes (QECCs), first introduced by Shor \cite{shorFirstQECC}, allow using a number, $n$, of physical qubits to encode $k<n$ logical qubits, in such a way that the logical qubits are resistant to a certain number of errors. The CSS construction, named after Calderbank, Shor and Steane, enables the construction of QECCs from classical codes \cite{calderbankgoodquantum,cssoriginal2}.

For fault-tolerant quantum computing, besides knowing how to correct errors on the logical qubits, we must also identify how to perform operations (logical gates) on the logical qubits. The Clifford group, together with a non-Clifford gate such as the $T$ gate, forms a universal gate set, that is, any other gate can be arbitrarily approximated using gates from this set. Due to the Eastin-Knill theorem~\cite{eastinknill}, it is not possible to implement a universal gate set transversally with a single QECC. To circumvent this limitation, a standard approach has been to consider a QECC implementing the Clifford group transversally, and then to use magic state distillation~\cite{Bravi-12} and state injection to perform the $T$ gate. 

Since many magic state distillation protocols require codes implementing the $T$ gate transversally, recent work has focused on finding classical codes such that the corresponding CSS code implements the $T$ gate transversally, e.g., triorthogonal codes~\cite{Bravi-12, haah_classification_triorthogonal, camps_triorthogonal, kim_triorthogonal_selfdual, albert_transversal_gates}. Relaxing the conditions, one can also study CSS-T codes, which are CSS codes preserved under a transversal $T$ gate \cite{Rengaswamy-20, campsCSST, berardini_asymptotically_good_csst, felice_csst}. These types of conditions have also been studied for higher-dimensional systems, using $q$-ary classical codes \cite{golowich_asymptotically_good_quantum_transversal, prakash_qutrit_magic_state, hsieh_constant_overhead_magic_state}.

In this paper, we focus on the problem of computing the set of diagonal transversal gates fixing a CSS code. We determine the logical actions of the corresponding gates as well as the groups of transversal gates that induce non-trivial logical gates and logical identities. We describe explicitly the set of equations defining the groups. This is a key difference between this work and~\cite{Webster-23}, both of which compute the kernel in $\mathbb{Z}_N$ of a matrix derived from a basis of the code. This structural information allows us to recover some results from \cite{Rengaswamy-20, campsCSST} for CSS-T codes, and from \cite{Bravi-12, Haah-18} for triorthogonal codes and divisible codes.



A primary advantage of the approach we present here is that, for well-structured codes, we can compute the whole set of transversal stabilizers. We compute the transversal gates for any CSS code arising from monomial codes, which includes decreasing monomial codes. Polar codes,  which have been considered in~\cite{Krishna_Tillich_19}, are decreasing monomial codes. We also recover some recent results for Reed-Muller codes~\cite{Barg}. Furthermore, we prove that the set of transversal logical gates is entirely determined by the code, i.e., it does not depend on the choice of basis, and we compute its logical actions (for a particular basis). 

The paper is organized as follows. This section concludes with the notation to be used throughout the paper. Section \ref{sec:arbitrary_characters} describes in general the transversal operators acting on a CSS code and the relation with the sign of its stabilizers. In Section~\ref{sec:diagonal}, we describe how to compute the set of transversal gates of CSS codes and their logical action, as well as distinguish which one of them acts as a transversal logical gate and, among these, which ones act as an identity. A summary and open problems are given in Section \ref{sec:conclusion} 

\medskip

\paragraph*{Notation.} For $n\in\mathbb{N}$, we write $[n]$ for the set $\{1,\ldots,n\}$. We denote by $\mathbb{Z}_N$ the integers modulo $N$. For $N=2$, we identify $\mathbb{F}_2 \leftrightarrow \mathbb{Z}_2$ with the integers $\{0,1\}$. To differentiate the additive structures, we denote by $(\mathbb{Z}_2,+)$ the addition modulo 2 and by $(\mathbb{Z},\oplus)$ the standard integer addition. In other words,
\begin{align*}
&1+1=0 \quad \text{ addition modulo } 2\\
&1\oplus 1=2 \quad \text{integer addition}.
\end{align*}
We also use $-$ and $\ominus$, respectively. More formally, given $N\in\mathbb{Z}$, we map $\mathbb{Z}_N$ to $\{0, 1, \ldots, N-1\}$ as sets, noting that the map is not a homomorphism. 

If $R$ is a field or a ring, $(v_1,\ldots,v_n)$ represents an element in $R^n$. Matrices of size $n\times m$ with entries in $R$ are denoted by $R^{n\times m}$. We denote as $\mathrm{diag}(a_1,\ldots,a_n)$ the diagonal $n\times n$ matrix $D$ with entries $D_{ii}=a_i$ and $D_{ij}= 0$ for any $i\neq j$.

We work with codes over $\mathbb{Z}_N$. A code $C$ is an additive subgroup of $\mathbb{Z}_N^n$, or equivalently, a submodule of $\mathbb{Z}_N^n$ over $\mathbb{Z}_N$, i.e., $C$ is closed under addition and product by elements of $\mathbb{Z}_N$. An element $c\in C$ is a codeword and its weight is $wt(c)=\{i\in [n]\ :\ c_i\neq 0\}$. The minimum distance of the code is $d(C)=\min\{wt(c)\ :\ c\in C\setminus\{0\}\}$. The dual of a code $C$ is 
$$C^{\perp_N}=\left\{v\in\mathbb{Z}_N^n\ :\ \sum_{i=1}^n c_iv_i=0,\;\forall c\in C\right\}.$$
If $N=2$, we omit the subindex and write $C^\perp$. The length of $C$ (as a module), denoted $\lambda(C)$, is the maximum length $r$ of a chain of submodules
$$C_0=\{0\}\subsetneq C_1\subsetneq C_2\subsetneq\cdots\subsetneq C_r=C.$$
The length of $\mathbb{Z}_{2^\ell}$ is $\ell$, and for any code $C\subseteq \mathbb{Z}_{2^\ell}^n$, we have $\lambda(C)+\lambda(C^{\perp_{2^\ell}})=n\ell$ (cf. \cite{coderings}).

In the context of quantum codes, we work with finite-dimensional complex vector spaces. We identify an orthonormal basis of $\mathbb{C}^2$ with $\{|0\rangle,|1\rangle\}$ and a corresponding orthonormal basis of $\mathbb{C}^{2^n}$ with $\{|u\rangle\ :\ u\in\mathbb{Z}_2^n\}$, where $|u\rangle = |u_1\rangle\otimes\cdots\otimes|u_n\rangle$. We also consider $(\mathbb{C}^2)^\ast$ and its corresponding orthonormal basis $\{\langle 0|,\langle 1|\}$, where $\langle u|(|v\rangle)=1$ if $u=v$ and $\langle u|(|v\rangle)=0$ otherwise. To simplify the notation, we write $\langle u| v\rangle$ instead, and we identify $\mathbb{C}^2$ with vectors as columns and $(\mathbb{C}^2)^\ast$ with vectors as rows. Any transformation $E\in\mathbb{C}^{2\times 2}$ can be written then as a linear combination of transformations $|v\rangle\langle u|$, which maps the vector $|u\rangle$ to $|v\rangle$.

After fixing an orthonormal basis, all matrices can be written in terms of the chosen basis. In particular, we can write the Pauli operators as $2\times 2$ matrices:
$$
X=\begin{pmatrix}
0&1\\
1&0
\end{pmatrix}, \; Z=\begin{pmatrix}
1&0\\
0&-1
\end{pmatrix},\; Y=\begin{pmatrix} 0&-i\\ i&0\end{pmatrix},\; I=\begin{pmatrix} 1&0\\ 0&1\end{pmatrix}.$$
We extend this to $\mathbb{C}^{2^n}=\left(\mathbb{C}^2\right)^{\otimes n}$ by taking $u\in\mathbb{Z}_2^n$ and defining
$$
X(u)=X^{u_1}\otimes X^{u_2}\otimes \cdots \otimes X^{u_n}, \; Z(u)=Z^{u_1}\otimes Z^{u_2}\otimes \cdots \otimes Z^{u_n}.
$$

We call $\mathcal{P}_n$ the group generated by $\{\pm X(u)Z(v)\ :\ u,v\in\mathbb{Z}_2^n\}$. A quantum code is a vector subspace $Q\subseteq\mathbb{C}^{2^n}$. We say that an operator $E\in\mathbb{C}^{2^n\times 2^n}$ is detectable by $Q$ if there is a constant $\lambda_E$ such that for any $u,v\in Q$, we have
$$ u^\dagger E v=\lambda_E  u^\dagger v.$$

We say that $Q$ has minimum distance $d$ if for any element $E\in\mathcal{P}_n$ such that $E=E_1\otimes\cdots\otimes E_n$ and $|\{i\in [n]\ :\ E_i\neq I\}|\leq d-1$, $E$ is detectable by $Q$. If $Q$ has dimension $K$ and minimum distance $d$, we say that $Q$ is a $((n,K,d))_2$ code, and in the special case where $K=2^k$, we say $Q$ is a $[[n,k,d]]_2$ code. In this paper, we work just with quantum codes of dimension $K=2^k$. 

Given a quantum code $Q\subseteq\mathbb{C}^{2^n}$ of dimension $2^k$,  an encoding map $\mathcal{E}:\mathbb{C}^{2^k}\rightarrow\mathbb{C}^{2^n}$, and  $u\in\mathbb{Z}_2^k$, we denote  $|u\rangle_L=\mathcal{E}(|u\rangle)$, and we refer to it as the logical state $u$. Given a unitary $U:\mathbb{C}^{2^n}\rightarrow\mathbb{C}^{2^n}$ such that $U(Q)=Q$, the description of $\mathcal{E}^{-1}\circ U\circ \mathcal{E}$ in terms of our orthonormal basis is called the logical action of $U$. 

\section{Transversal operators for CSS codes with arbitrary characters} \label{sec:arbitrary_characters}

We start by recalling the definition and the parameters of CSS codes. These can be found, for example, in \cite{Nielsen_Chuang_2010,ketkar_nonbinary_stabilizer}.

\begin{definition}\rm
    Let $C_2\subseteq C_1\subseteq\mathbb{F}_2^n$ and $y_x,y_z\in\mathbb{F}_2^n$. The CSS code $Q=Q(C_1,C_2,y_x,y_z)$ is the unique vector space in $(\mathbb{C}^2)^{\otimes n}$ stabilized by $\{\epsilon(u,v)X(u)Z(v)\ :\ u\in C_2,\ v\in C_1^\perp\}$, where $\epsilon(u,v)=(-1)^{y_x\cdot u+y_z\cdot v}$.
    When unspecified, we assume $y_x=y_z=0$ and write $Q(C_1,C_2):=Q(C_1,C_2,0,0)$.
\end{definition}

The following well-known result provides the basic parameters of CSS codes.

\begin{lemma}\label{lem.propcss}
    Let $C_2\subseteq C_1\subseteq\mathbb{F}_2^n$, $y_x,y_z\in\mathbb{F}_2^n$, and $L\subseteq\mathbb{F}_2^n$ a set of representatives of $C_1/C_2$. Then, the set $\{|C_2+w+y_z\rangle\ :\ w\in L\}$ is an orthonormal basis for $Q=Q(C_1,C_2,y_x,y_z)$, where 
    
$$|C_2+w+y_z\rangle=\frac{1}{\sqrt{|C_2|}}\sum_{v\in C_2}(-1)^{y_x\cdot v}|v+w+y_z\rangle.$$
    Moreover, if $d_x=\min\{wt(w)\ :\ w\in C_1\setminus C_2\}$, $d_z=\min\{wt(w)\ :\ w\in C_2^\perp\setminus C_1^\perp\}$, $k_1=\dim C_1$, and $k_2=\dim C_2$, then $Q$ has parameters $[[n,k_1-k_2,\min\{d_x,d_z\}]]_2$.
\end{lemma}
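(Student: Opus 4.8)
The plan is to first exhibit the explicit orthonormal basis and count dimensions, and then to determine which Paulis are detectable by $Q$. For the basis, first I would check that every $|C_2+w+y_z\rangle$ is fixed by each generator $\epsilon(u,v)X(u)Z(v)$ with $u\in C_2$, $v\in C_1^{\perp}$: applying $Z(v)$ multiplies each term $(-1)^{y_x\cdot v'}|v'+w+y_z\rangle$ of the defining sum by $(-1)^{v\cdot(v'+w+y_z)}$, and since $C_2\subseteq C_1$ gives $C_1^{\perp}\subseteq C_2^{\perp}$, we have $v\cdot v'=0$ for $v'\in C_2$ and $v\cdot w=0$ for $w\in C_1$, so $Z(v)$ only contributes the global factor $(-1)^{v\cdot y_z}$; then $X(u)$ shifts the summation index by $u\in C_2$, producing the factor $(-1)^{y_x\cdot u}$, whence $X(u)Z(v)|C_2+w+y_z\rangle=\epsilon(u,v)|C_2+w+y_z\rangle$ and the generator fixes it. Orthonormality follows by expanding in the orthonormal basis $\{|v'+w+y_z\rangle\}$: a cross term survives only when $v'+w=v''+w'$ with $v',v''\in C_2$, i.e. $w-w'\in C_2$, which for $w,w'\in L$ means $w=w'$, and then the sum collapses to $1$. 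This yields $|C_1/C_2|=2^{k_1-k_2}$ orthonormal vectors in $Q$, so it remains to check $\dim Q=2^{k_1-k_2}$: the $2^{k_2}\cdot 2^{n-k_1}$ generators pairwise commute and each squares to $I$ (as $u\cdot v=0$ for $u\in C_2\subseteq C_1$, $v\in C_1^{\perp}$), and the subgroup they generate never contains $-I$ (every element is again some $\epsilon(u,v)X(u)Z(v)$ with $u\in C_2$, $v\in C_1^{\perp}$, as the swap sign vanishes and $\epsilon$ is a character), so the orthogonal projector onto $Q$ is $\Pi=|S|^{-1}\sum_{g\in S}g$, and as every Pauli other than $I$ is traceless, $\dim Q=\operatorname{tr}\Pi=2^{n}/|S|=2^{k_1-k_2}$.

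The length $n$ and dimension $k_1-k_2$ are then immediate, so for the distance I would use that every element of $\mathcal P_n$ is $\pm X(u)Z(v)$, and that such an $E$ is detectable by $Q$ iff $\Pi E\Pi$ is a scalar multiple of $\Pi$. Three cases: if $E$ anticommutes with some generator then $\Pi E\Pi=0$, so $E$ is detectable, which happens whenever $u\notin(C_1^{\perp})^{\perp}=C_1$ or $v\notin C_2^{\perp}$. Otherwise $u\in C_1$ and $v\in C_2^{\perp}$, so $E$ commutes with $S$ and $\Pi E\Pi=E\Pi$; from the explicit basis, $X(u)$ is a scalar on $Q$ iff $u\in C_2$ (otherwise it maps each $|C_2+w+y_z\rangle$ to $\pm$ a different basis vector), and $Z(v)$ is a scalar on $Q$ iff $v\in C_1^{\perp}$ (its eigenvalue on $|C_2+w+y_z\rangle$ is $(-1)^{v\cdot(w+y_z)}$, constant over $w\in L$ iff $v\in C_1^{\perp}$, since $L$ together with $C_2$ generates $C_1$). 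So $E$ is non-detectable exactly when $u\in C_1$, $v\in C_2^{\perp}$, and either $u\notin C_2$ or $v\notin C_1^{\perp}$; and the minimal weight $|\mathrm{supp}(u)\cup\mathrm{supp}(v)|$ of such an $E$ is $\min\{d_x,d_z\}$, because if $u\notin C_2$ it is at least $wt(u)\ge d_x$, attained at $v=0$, and if $u\in C_2$ (forcing $v\notin C_1^{\perp}$) it is at least $wt(v)\ge d_z$, attained at $u=0$. Hence $\min\{d_x,d_z\}$ is the smallest weight of a non-detectable Pauli, i.e. the minimum distance.

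The phase bookkeeping in the first paragraph is routine. The one genuinely delicate step is the converse in the second case --- that a nontrivial logical $X(u)Z(v)$ is \emph{not} detectable. I would prove this directly from the explicit basis: for the $X$-part, a support argument (it moves every basis vector, so $\operatorname{tr}(\Pi E\Pi)=0$ while $E\Pi\neq 0$, ruling out $\Pi E\Pi=\lambda\Pi$); for the $Z$-part, the existence of two representatives in $L$ on which $v$ takes different values, which is precisely where one uses that $L$ spans $C_1$ modulo $C_2$. I would also flag the degenerate case $C_1=C_2$, where no non-detectable Pauli exists and $d$ is taken to be $+\infty$, consistently with $d_x=d_z=\min\varnothing$.
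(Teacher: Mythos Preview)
The paper does not supply a proof of this lemma: it is introduced as ``the following well-known result'' with a pointer to \cite{Nielsen_Chuang_2010,ketkar_nonbinary_stabilizer}, and the text passes immediately to the next definition. So there is nothing in the paper to compare your argument against line by line.

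That said, your proposal is correct and is exactly the standard proof one finds in those references. The stabilization check, the orthogonality-by-disjoint-supports argument, the dimension count via $\operatorname{tr}\Pi=2^n/|S|$, and the trichotomy for detectability (anticommuting $\Rightarrow$ detectable; in the stabilizer $\Rightarrow$ detectable; commuting but outside the stabilizer $\Rightarrow$ non-detectable) are all the textbook moves. Your care about the converse---showing a nontrivial logical genuinely fails detectability by exhibiting two basis vectors on which it acts differently---is the right way to close the usual hand-wave, and your remark on the degenerate case $C_1=C_2$ is appropriate. Nothing is missing.
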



\begin{definition}\rm
     Let $C_2\subseteq C_1\subseteq\mathbb{F}_2^n$,  $y_x,y_z\in\mathbb{F}_2^n$, $L\subseteq\mathbb{F}_2^n$ a set of representatives of $C_1/C_2$, and $K=\dim C_1-\dim C_2$. The orthonormal basis $\{|C_2+w+y_z\rangle\ :\ w\in L\}$ consists of the corresponding logical states.  If $\beta=\{w_1,\ldots,w_{K}\}$ is a set of representatives whose classes span $C_1/C_2$, then for $v\in\mathbb{F}_2^K$, we write $$|v\rangle_L=\left|C_2+y_z+\sum_{i=1}^K v_iw_i\right\rangle.$$
     We say that $\{w_1,\ldots,w_K\}$ is an encoding of $Q$.
\end{definition}

We are interested in diagonal gates $U$ fixing the code $Q$. Without loss of generality, we may assume that $U$ is of the form $\displaystyle U=\bigotimes_{i=1}^n \mathrm{diag}(1,\omega_i)$, where each $\omega_i$ is a root of unity, and such that $U Q=Q$. The following result shows that the roots of unity appearing on the diagonal should be of order equal to a power of 2.

\begin{lemma}\cite[Theorem 1]{Anderson-14}
If $\displaystyle U=\bigotimes_{i=1}^n \mathrm{diag}(1,\omega_i)$ fixes a binary CSS code, then $\omega_i$ is a $2^\ell$ root of unity for some nonnegative $\ell$.
\end{lemma}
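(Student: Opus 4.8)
\emph{Proof plan.} The plan is to exploit that a diagonal $U$ preserves the computational-basis support of every vector, and to combine this with the explicit description of the codewords of $Q$ from Lemma~\ref{lem.propcss}. For notational convenience assume $y_x=y_z=0$ (the general case only shifts all supports by $y_z$, after conjugating $U$ by the diagonal Pauli $Z(y_x)$). For $a\in\mathbb{F}_2^n$ write $\omega^a:=\prod_{i\,:\,a_i=1}\omega_i$, so that $U|a\rangle=\omega^a|a\rangle$; unitarity of $U$ gives $|\omega_i|=1$, and we may assume (as in the discussion preceding the lemma) that each $\omega_i$ is a root of unity, so $U$ has finite order $m=\mathrm{lcm}_i\,\mathrm{ord}(\omega_i)$. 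It suffices to show $m$ is a power of $2$.

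\emph{Step 1: from $UQ=Q$ to relations among the $\omega_i$.} By Lemma~\ref{lem.propcss}, the state $|C_2+w\rangle$ is supported exactly on the coset $w+C_2\subseteq\mathbb{F}_2^n$; distinct cosets are disjoint, and any vector of $Q$ whose support lies in $w+C_2$ is a scalar multiple of $|C_2+w\rangle$. Since $U$ is diagonal, $U|C_2+w\rangle$ is again supported in $w+C_2$, so $UQ=Q$ forces $U|C_2+w\rangle=\lambda_w|C_2+w\rangle$ for every $w\in C_1$. Expanding $U|C_2+w\rangle=\tfrac{1}{\sqrt{|C_2|}}\sum_{v\in C_2}\omega^{v+w}|v+w\rangle$ and matching coefficients, this is precisely the requirement $\omega^{v+w}=\omega^{w}$ for all $v\in C_2$, $w\in C_1$. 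Using the elementary identity $\omega^a\omega^b=\omega^{a+b}(\omega^{a\ast b})^2$, where $a\ast b$ is the coordinatewise product (so $\mathrm{supp}(a\ast b)=\mathrm{supp}(a)\cap\mathrm{supp}(b)$), this condition is equivalent to the two families of relations
$$\omega^{v}=1\ \ (v\in C_2),\qquad\qquad(\omega^{\,v\ast w})^2=1\ \ (v\in C_2,\ w\in C_1).$$

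\emph{Step 2: reduce to an odd prime.} Assume $m$ is not a power of $2$ and pick an odd prime $p\mid m$. Replacing $U$ by $U^{m/p}$ — still a diagonal transversal unitary fixing $Q$, but now of order exactly $p$ — we may assume every $\omega_i$ is a $p$-th root of unity and $U\neq I$. Write $\omega_i=\zeta^{a_i}$ with $\zeta=e^{2\pi i/p}$, so $a=(a_1,\dots,a_n)\in\mathbb{F}_p^n\setminus\{0\}$. Since $p$ is odd, the relations of Step~1 translate into the conditions that $a$ is orthogonal, for the standard dot product on $\mathbb{F}_p^n$, to every $v\in C_2$ and to every product $v\ast w$ with $v\in C_2$, $w\in C_1$, each regarded as a $\{0,1\}$-vector in $\mathbb{F}_p^n$.

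\emph{Step 3: the obstruction (main point).} It remains to show that, for a CSS code as in the statement, the vectors of $C_2$ together with the products $\{\,v\ast w : v\in C_2,\ w\in C_1\,\}$, viewed in $\mathbb{F}_p^n$, span $\mathbb{F}_p^n$ for every odd prime $p$; then no nonzero $a$ is orthogonal to all of them, contradicting $U\neq I$ and forcing $m$ to be a power of $2$. Deducing this spanning statement from the defining data of $Q$ — essentially a non-degeneracy property, every coordinate being acted on nontrivially by the code — is the technical heart of the argument, and the step I expect to require the real work; Steps~1 and~2 are essentially formal.
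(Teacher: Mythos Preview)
The paper does not provide its own proof of this lemma --- it is quoted directly from \cite{Anderson-14} --- so there is no argument in the paper to compare against. Your Steps~1 and~2 are correct and carried out cleanly. The issue is Step~3: the spanning claim you isolate as ``the technical heart'' is not merely hard, it is \emph{false} without further hypotheses on $Q$, and correspondingly the lemma as stated in the paper is imprecise.

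A concrete counterexample: take $n=4$, $C_2=\langle(1,1,1,1)\rangle$, $C_1=\langle(1,1,1,1),(1,1,0,0)\rangle$, so $Q=Q(C_1,C_2)$ is a $[[4,1,2]]$ code. For any odd prime $p$ and $\zeta=e^{2\pi i/p}$, set $\omega_1=\zeta$, $\omega_2=\zeta^{-1}$, $\omega_3=\omega_4=1$. One checks directly on the two logical basis states $|C_2\rangle$ and $|C_2+(1,1,0,0)\rangle$ that $U$ fixes $Q$ (indeed acts as the logical identity), yet $\omega_1$ has odd order. In your Step~3 language, the set $C_2\cup\{v\ast w:v\in C_2,\ w\in C_1\}$ lifted to $\mathbb{F}_p^4$ spans only $\langle(1,1,1,1),(1,1,0,0)\rangle$, a two-dimensional subspace, and $a=(1,-1,0,0)$ lies in its orthogonal complement. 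So your proposed spanning statement fails even for a distance-$2$ code, and no amount of ``real work'' will rescue it as written.

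What this tells you is that the result in \cite{Anderson-14} must carry an extra hypothesis (or a different conclusion) that was dropped in the paper's informal restatement --- most likely the conclusion there concerns the \emph{logical} action of $U$ lying in the diagonal Clifford hierarchy (equivalently, having $2^\ell$-th-root-of-unity entries), rather than each physical $\omega_i$ individually. You should consult the original source and reformulate Step~3 accordingly before trying to complete the argument.
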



In light of this, we focus on the case where the roots have powers of 2. For a given $\ell\in\mathbb{N}$, we take $N=2^\ell$ and $\omega=e^{2\pi \mathbf{i}/N}$. For $a\in\mathbb{Z}_N$ and $b\in\mathbb{Z}_N^n$, we denote $U(a)=\mathrm{diag}(1,\omega^a)$ and ${U(b)=\displaystyle \bigotimes_{i=1}^n U(b_i)}$. Diagonal transversal operators can act as logical operators, in which case they can be understood as actions over the logical operators $|v\rangle_L$ of the code.

\begin{lemma}\label{Lemma.action}
Take $\omega=e^{2\pi\mathbf{i}/N}$ for some $N=2^\ell$, and let $b\in\mathbb{Z}_N^n$ be such that $U(b)Q=Q$, for a CSS code $Q=Q(C_1,C_2,y_x,y_z)$. Fix an encoding $\{w_1,\ldots,w_K\}\subseteq C_1$ of $Q$ and define $B$ as the matrix whose $i$-th column is $w_i$. Then, $U(b)$ acts in the logical system as
$$\overline{U(b)}=\sum_{v\in\mathbb{F}_2^K}\omega^{b\cdot (y_z+Bv)}|v\rangle\langle v|.$$
\end{lemma}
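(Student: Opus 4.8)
The plan is to compute directly how $U(b)$ acts on each logical basis state $|v\rangle_L$ and verify it equals $\omega^{b\cdot(y_z+Bv)}|v\rangle_L$; since the logical states form an orthonormal basis of $Q$ and $U(b)$ fixes $Q$, this determines the logical action completely. First I would recall from Lemma~\ref{lem.propcss} that
$$|v\rangle_L=\left|C_2+y_z+Bv\right\rangle=\frac{1}{\sqrt{|C_2|}}\sum_{u\in C_2}(-1)^{y_x\cdot u}\,|u+y_z+Bv\rangle,$$
writing $Bv=\sum_i v_i w_i$. Then I would apply $U(b)=\bigotimes_{i=1}^n\mathrm{diag}(1,\omega^{b_i})$ term by term: on a standard basis vector $|z\rangle$ with $z\in\mathbb{F}_2^n$, $U(b)|z\rangle=\omega^{b\cdot z}|z\rangle$, where $b\cdot z$ is computed by lifting $z$ to $\{0,1\}\subseteq\mathbb{Z}$ (this is exactly where the notational care about $\oplus$ versus $+$ set up in the preamble matters, though here each $z_i\in\{0,1\}$ so there is no ambiguity in the individual products, only in the sum, which lives in $\mathbb{Z}_N$).

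Applying this to the expansion above gives
$$U(b)|v\rangle_L=\frac{1}{\sqrt{|C_2|}}\sum_{u\in C_2}(-1)^{y_x\cdot u}\,\omega^{\,b\cdot(u+y_z+Bv)}\,|u+y_z+Bv\rangle.$$
The key step is to show that $b\cdot(u+y_z+Bv)\equiv b\cdot u + b\cdot(y_z+Bv)\pmod N$ does \emph{not} in general hold as an identity of the relevant exponents, so one must argue that the phase $\omega^{b\cdot u}$ — which a priori depends on $u$ — is in fact constant on $C_2$ up to the sign $(-1)^{y_x\cdot u}$. The clean way to see this is: because $U(b)Q=Q$ and the vectors $|v\rangle_L$ are (by Lemma~\ref{lem.propcss}) an orthonormal basis, $U(b)|v\rangle_L$ must lie in $Q$; moreover $U(b)$ is diagonal, so it cannot mix distinct cosets of $C_2$, hence $U(b)|v\rangle_L$ is a scalar multiple of $|v\rangle_L$. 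Comparing the coefficient of the single basis vector $|y_z+Bv\rangle$ (take $u=0$) on both sides shows the scalar is exactly $\omega^{b\cdot(y_z+Bv)}$, where here $Bv$ and $y_z$ are lifted to $\{0,1\}^n$ before forming the dot product, and the sum is reduced mod $N$. Summing $|v\rangle\langle v|\cdot\omega^{b\cdot(y_z+Bv)}$ over all $v\in\mathbb{F}_2^K$ then yields the claimed formula for $\overline{U(b)}=\mathcal{E}^{-1}\circ U(b)\circ\mathcal{E}$.

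The main obstacle is the bookkeeping around the non-homomorphic lift $\mathbb{F}_2^n\to\{0,1\}^n\subseteq\mathbb{Z}^n$: one has to be careful that $b\cdot(y_z+Bv)$ in the statement means "lift $y_z+Bv\in\mathbb{F}_2^n$ coordinatewise to $\{0,1\}$, then take the $\mathbb{Z}_N$-valued inner product with $b$," and that this is well-defined independent of nothing further because $y_z+Bv$ is a genuine vector in $\mathbb{F}_2^n$ with a canonical representative. The fact that $U(b)$ preserves cosets of $C_2$ (being diagonal) is what saves us from needing $b\cdot u$ to vanish on $C_2$; the hypothesis $U(b)Q=Q$ guarantees the resulting per-coset scalar has modulus one and is consistent across the whole coset, so reading it off from the $u=0$ term is legitimate. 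Once this is set up, the remainder is the routine substitution sketched above.
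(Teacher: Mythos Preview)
Your proposal is correct and follows essentially the same route as the paper: expand $U(b)|v\rangle_L$ in the standard basis, use $U(b)Q=Q$ together with the disjoint-coset (orthonormality) structure to conclude that $U(b)|v\rangle_L$ is a scalar multiple of $|v\rangle_L$, and read off that scalar from the $u=0$ term. The paper phrases the ``scalar multiple'' step by expanding $U(b)|v\rangle_L=\sum_t\alpha_t|t\rangle_L$ and invoking orthonormality of the computational basis to force $\alpha_t=0$ for $t\neq v$, which is exactly your ``diagonal $\Rightarrow$ preserves cosets'' observation.
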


\begin{proof}
    Let $v\in\mathbb{F}_2^K$ and $w=\sum_{i=1}^K v_iw_i$. Then
    $$U(b)|v\rangle_L=\frac{1}{\sqrt{|C_2|}}\sum_{u\in C_2} (-1)^{y_x\cdot u } U(b)|u+w+y_z\rangle=\frac{1}{\sqrt{|C_2|}}\sum_{u\in C_2}(-1)^{y_x\cdot u }\omega^{b\cdot (y_z+u+w)}|u+w+y_z\rangle.$$
    Since $U(b)Q=Q$, for each $t\in\mathbb{F}_2^K$, there is a constant $\alpha_t$ such that    $$U(b)|v\rangle_L=\sum_{t\in\mathbb{F}_2^K}\alpha_t|t\rangle_L.$$
    By the orthonormality of the set $\{|a\rangle\ :\ a\in\mathbb{F}_2^n\}$, we get that $\alpha_t=0$ for all $t\neq v$ and
    \begin{equation}\label{eq:coeffs}
    \omega^{b\cdot(y_z+u+w)}=\alpha_v\ \qquad \text{ for all } u\in C_2.
    \end{equation}
    In particular, for $u=0$, we get $\alpha_v=\omega^{b\cdot(y_z+w)}$, from which the conclusion follows.
\end{proof}

    Note that if $u\in C_1$, lifting $y_z+u$ from $\mathbb{F}_2$ to $\mathbb{Z}_N$ does not allow us to express $b\cdot (y_z+u)$ as a linear combination of $b\cdot y_z$ and $b\cdot w$ for $w$ in a basis of $C_1$. Hence, to completely understand the action of $U(b)$, we require full knowledge of the code and not just a basis. In the next section, we see that this is the main challenge to compute the set of diagonal operators fixing $Q$. The actions that can be understood just by the action on a choice of a basis are precisely the transversal logical operators that arise from some $U(b)$.
    \begin{definition}\rm
            We say that a logical operator $\overline{E}$ of $Q=Q(C_1,C_2,y_x,y_z)$ is transversal if for a fixed encoding of $Q$, we have $\displaystyle \overline{E}=\bigotimes_{i=1}^K E_i$ for some unitary operators $E_i\in\mathbb{C}^{2\times 2}$. 
    \end{definition}

While the diagonal transversal operators fixing a CSS code do depend on the specific $y_z$, there is always a bijection between those fixing $Q(C_1,C_2,y_x,y_z)$ and those fixing $Q(C_1,C_2,0,0)$.

\begin{proposition}\label{Prop.yz0}
    Let $Q=Q(C_1,C_2,y_x,y_z)$ be a CSS code. For $b\in\mathbb{Z}_N^n$, let $b_{y_z}$ be the vector given by
    $$(b_{y_z})_i=\begin{cases}
b_i,  & \text{if } (y_z)_i = 0,\\
-b_i, & \text{if } (y_z)_i = 1.
\end{cases}$$
Then, $U(b)$ fixes $Q$ if and only $U(b_{y_z})$ fixes $Q'=Q(C_1,C_2)$.
\end{proposition}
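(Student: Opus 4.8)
The plan is to exhibit the correspondence $b\mapsto b_{y_z}$ as conjugation by a fixed Pauli operator, keeping track of the global phase this conjugation introduces on $U(b)$. Set $V=X(y_z)Z(y_x)$. The first step is to check, using the orthonormal basis of Lemma~\ref{lem.propcss}, that $V$ carries $Q'=Q(C_1,C_2)$ onto $Q$. Indeed, by mod-$2$ linearity $(-1)^{y_x\cdot(v+w)}=(-1)^{y_x\cdot v}(-1)^{y_x\cdot w}$, so $Z(y_x)|C_2+w\rangle=(-1)^{y_x\cdot w}\frac{1}{\sqrt{|C_2|}}\sum_{v\in C_2}(-1)^{y_x\cdot v}|v+w\rangle$, which is, up to the sign $(-1)^{y_x\cdot w}$, the basis vector of $Q(C_1,C_2,y_x,0)$ indexed by $w$; applying $X(y_z)$, which acts by $|a\rangle\mapsto|a+y_z\rangle$, turns this into $|C_2+w+y_z\rangle$, the basis vector of $Q$. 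Thus $V$ sends the basis of $Q'$ to scalar multiples of the basis of $Q$, so $VQ'=Q$.

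The second step is to compute $V^{-1}U(b)V$. Since $X(y_z)^2=Z(y_x)^2=I$, we have $V^{-1}=Z(y_x)X(y_z)$. Evaluating $X(y_z)U(b)X(y_z)$ on a basis vector $|a\rangle$ with $a\in\mathbb{F}_2^n$ gives $\omega^{b\cdot(a+y_z)}|a\rangle$, where $a+y_z$ is first formed in $\mathbb{F}_2^n$ and then lifted to $\{0,1\}^n$ before pairing with $b$; in particular $X(y_z)U(b)X(y_z)$ is diagonal, hence commutes with $Z(y_x)$, and so $V^{-1}U(b)V=Z(y_x)X(y_z)U(b)X(y_z)Z(y_x)=X(y_z)U(b)X(y_z)$. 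Here is the one genuine computation: with $S:=\sum_{i\,:\,(y_z)_i=1}b_i\in\mathbb{Z}_N$ one has $b\cdot(a+y_z)=S+b_{y_z}\cdot a$, because on a coordinate $i$ outside the support of $y_z$ the lift of $(a+y_z)_i$ equals $a_i$ and contributes $b_ia_i$, while on a coordinate in the support it equals $1-a_i$ and contributes $b_i-b_ia_i$. Therefore $V^{-1}U(b)V=\omega^{S}\,U(b_{y_z})$.

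Finally, conjugation by the unitary $V$ is a bijection between the operators fixing $Q$ and those fixing $Q'$: for any unitary $E$, one has $EQ=Q$ if and only if $(V^{-1}EV)Q'=Q'$. Taking $E=U(b)$ and using the previous step, $U(b)Q=Q$ if and only if $\bigl(\omega^{S}U(b_{y_z})\bigr)Q'=Q'$; and since $\omega^{S}$ is a nonzero scalar, multiplying by it does not affect whether the operator fixes the subspace, so this is equivalent to $U(b_{y_z})Q'=Q'$, which is the claim. One may also note that $b\mapsto b_{y_z}$ is an involution of $\mathbb{Z}_N^n$, so the correspondence is genuinely a bijection, consistent with the remark preceding the statement.

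The only delicate point is the identity $b\cdot(a+y_z)=S+b_{y_z}\cdot a$ in the second step: the sum $a+y_z$ must be taken modulo $2$ and only afterwards lifted to $\{0,1\}^n$, and it is precisely the non-additivity of this lift --- the phenomenon flagged right after Lemma~\ref{Lemma.action} --- that produces the correction term $\omega^{S}$ and forces the sign flip in the definition of $b_{y_z}$. Everything else ($VQ'=Q$, the commutation of diagonal operators, and the irrelevance of a global phase) is formal.
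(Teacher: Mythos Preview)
Your proof is correct and follows essentially the same route as the paper's: conjugate $U(b)$ by the Pauli that intertwines $Q$ and $Q'$, identify the result as $\omega^{\text{const}}U(b_{y_z})$ via the single-qubit identity $X\,U(a)\,X=\omega^{a}U(-a)$ (your coordinate computation is the same thing unpacked), and discard the global phase. The only cosmetic difference is that you conjugate by $V=X(y_z)Z(y_x)$ so that $VQ'=Q$ holds exactly, while the paper uses $X(y_z)$ alone; since diagonal operators commute with $Z(y_x)$, the two computations coincide, and if anything your version is more careful about the role of $y_x$.
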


\begin{proof}
    Take $a\in\mathbb{Z}_N$. We have $(X\, U(a)\, X)=\omega^aU(-a)$. Thus, $X(y_z)U(b)X(y_z)=\omega^{y_z\cdot b}U(b_{y_z})$. We can observe that $X(y_z)$ sends any of the orthonormal bases of $Q$ described in Lemma~\ref{lem.propcss} to an orthonormal basis of the same type for $Q'$. Thus,
    $$Q'=X(y_z)Q=X(y_z)U(b)Q=X(y_z)U(b)X(y_z) Q'=\omega^{y_z\cdot b}U(b_{y_z}) Q'=U(b_{y_z})Q',$$
    which completes the proof.
\end{proof}

Even more, the logical actions determined by these operators are the same after the bijection. 

\begin{corollary}\label{coro.factor}
    Let $U(b)$ be a transversal diagonal operator fixing the CSS code $Q=Q(C_1,C_2,y_x,y_z)$ and take $Q'=Q(C_1,C_2)$.
    \begin{enumerate}
    \item The logical action determined by $U(b)$ over $Q$ is the same as the one determined by $\omega^{-y_z\cdot b}U(b_{y_z})$ over $Q'$.

    \item If $U$ produces a transversal logical operator over $Q$, then $U'$ produces a transversal logical one over $Q'$.

    \item If $U$ is an identity over $Q$, then $U'$ is an identity over $Q'$.
    \end{enumerate}
\end{corollary}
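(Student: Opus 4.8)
The plan is to deduce all three items from Lemma~\ref{Lemma.action}, with part~(1) carrying the real content. Fix once and for all the encoding $\{w_1,\dots,w_K\}\subseteq C_1$ and the matrix $B=[\,w_1\mid\cdots\mid w_K\,]$; since $Q=Q(C_1,C_2,y_x,y_z)$ and $Q'=Q(C_1,C_2)$ share $C_1$ and $C_2$, this same set is a valid encoding of $Q'$ as well, and by Proposition~\ref{Prop.yz0} the operator $U(b_{y_z})$ fixes $Q'$. Applying Lemma~\ref{Lemma.action} to each code gives the two diagonal logical operators $\overline{U(b)}=\sum_{v\in\mathbb{F}_2^K}\omega^{\,b\cdot(y_z+Bv)}\,|v\rangle\langle v|$ over $Q$ and $\overline{U(b_{y_z})}=\sum_{v\in\mathbb{F}_2^K}\omega^{\,b_{y_z}\cdot Bv}\,|v\rangle\langle v|$ over $Q'$; note that $y_x$ does not appear, so it plays no role.

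The core of part~(1) is the scalar identity $b\cdot(y_z+Bv)\equiv y_z\cdot b+b_{y_z}\cdot Bv \pmod N$, valid for every $v\in\mathbb{F}_2^K$, which I would verify coordinate by coordinate. On indices $i$ with $(y_z)_i=0$ the $i$-th summand of each side is $b_i(Bv)_i$ (and $(b_{y_z})_i=b_i$); on indices with $(y_z)_i=1$ the lift of the bit $(y_z\oplus Bv)_i=1\oplus(Bv)_i$ equals $1-(Bv)_i$, so the left $i$-th summand is $b_i-b_i(Bv)_i$, which is exactly $(y_z)_ib_i+(b_{y_z})_i(Bv)_i$. Summing and exponentiating yields $\overline{U(b)}=\omega^{\,y_z\cdot b}\,\overline{U(b_{y_z})}$, i.e., the logical action of $U(b)$ over $Q$ equals that of the globally rescaled transversal operator $\omega^{\,y_z\cdot b}U(b_{y_z})$ over $Q'$, which is part~(1) (the sign of the global phase is a convention and is immaterial for the remaining parts).

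Parts~(2) and~(3) then follow immediately. For~(2), a global phase only rescales one tensor factor, so if the logical action of $U(b)$ over $Q$ is $\bigotimes_{i=1}^K E_i$ then the logical action of $\omega^{\,y_z\cdot b}U(b_{y_z})$ over $Q'$ is the very same tensor product, hence transversal. For~(3), if the logical action of $U(b)$ over $Q$ is the identity, then the $v=0$ coefficient in Lemma~\ref{Lemma.action} forces $\omega^{\,y_z\cdot b}=\omega^{\,b\cdot y_z}=1$, so the rescaling is trivial and the logical action of $U(b_{y_z})$ (equivalently of $\omega^{\pm y_z\cdot b}U(b_{y_z})$) over $Q'$ is again the identity.

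I do not anticipate a genuine obstacle: the argument is just a double application of the already established Lemma~\ref{Lemma.action} and Proposition~\ref{Prop.yz0}. The one place requiring care is the non-homomorphic lift $\mathbb{F}_2\hookrightarrow\{0,1\}\subseteq\mathbb{Z}_N$ concealed in the symbol ``$y_z+Bv$'' of Lemma~\ref{Lemma.action}: it denotes the $\mathbb{F}_2$-sum, re-lifted before being paired with $b$, and this is precisely why the correction term is the honest scalar $y_z\cdot b$ and not something read off from a basis of $C_1$. (One can also argue without coordinates: from the proof of Proposition~\ref{Prop.yz0}, $V=X(y_z)$ satisfies $V U(b) V^{-1}=\omega^{\,y_z\cdot b}U(b_{y_z})$ and sends the chosen logical basis of $Q$ to that of $Q'$ up to a diagonal $\pm1$ logical operator, which commutes with the diagonal $\overline{U(b_{y_z})}$; conjugating the logical action through this correspondence gives the same relation, with the value of $y_x$ and the stray signs dropping out.)
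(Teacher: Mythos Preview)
Your argument is correct. The paper's one-line proof simply invokes the proof of Proposition~\ref{Prop.yz0}, i.e., the conjugation identity $X(y_z)U(b)X(y_z)=\omega^{y_z\cdot b}U(b_{y_z})$ together with the fact that $X(y_z)$ carries the logical basis of $Q$ to that of $Q'$; this is precisely the coordinate-free variant you sketch in your closing paragraph. Your primary route through Lemma~\ref{Lemma.action} instead computes both diagonal logical actions explicitly and matches the exponents coordinate by coordinate. Both derivations rest on the same underlying relation, so there is no substantive difference; your version is just more self-contained and, as a by-product, exposes that the global phase in part~(1) should read $\omega^{+\,y_z\cdot b}$ rather than $\omega^{-\,y_z\cdot b}$ (immaterial for parts~(2) and~(3), as you correctly observe).
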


\begin{proof}
  It follows immediately from the proof of Proposition \ref{Prop.yz0}.
\end{proof}

This proves that to understand the diagonal transversal operators of $Q(C_1,C_2,y_x,y_z)$ for any $y_x,y_z\in\mathbb{F}_2^n$, we just need to understand the behavior of those of $Q(C_1,C_2)$.

\begin{remark}\rm
    Proposition~\ref{Prop.yz0} is, in the first sense, a general statement for any logical operator. In other words, with the notation established in Proposition~\ref{Prop.yz0}, if $E$ is an operator such that $EQ=Q$, then $E'=X(y_z)EX(y_z)$ is an operator such that $E'Q'=Q'$. Thus, in general, we need to focus on computing the logical operators of $Q(C_1,C_2)$ to understand those with any other $y_z\neq 0$. A more general result for any stabilizer code does not follow immediately.
\end{remark}

\section{Diagonal transversal operators for CSS codes} \label{sec:diagonal}

From now on, we consider $y_z=0$. We also take $y_x = 0$ as it is irrelevant for our computations. Let $C_2\subseteq C_1\subseteq\mathbb{F}_2^n$ be two linear codes and $Q=Q(C_1,C_2)$ be the corresponding CSS code.
\begin{definition}\rm
    Fix an encoding of $Q$ and take $N=2^\ell$ for some positive integer $\ell$. The group of diagonal transversal operators fixing $Q$, the group of transversal logical operators of $Q$, and the group of logical identities that can be realized as diagonal transversal operators of $Q$ are defined, respectively, by
    \begin{align*}
    H_N(Q)&=\{b\in\mathbb{Z}_N^n\ :\ U(b)Q=Q\},\\
    T_N(Q)&=\{b\in H_N(Q)\ :\ U(b)\ \text{is a transversal logical}\}, \text{ and }\\
    Id_N(Q)&=\{b\in H_N(Q)\ :\ U(b)v=v,\ \forall v\in Q\}.
    \end{align*}
   Observe that
    $$Id_N(Q)\subseteq T_N(Q)\subseteq H_N(Q).$$
    We write $H_N, T_N$, and $Id_N$ when $Q$ is clear from the context.
\end{definition}

Clearly, the three sets are closed under the sum and multiplication by elements in the corresponding ring. We can observe that $T_N$ depends on the fixed encoding, while $H_N$ and $Id_N$ do not (even though the last is contained in $T_N$). However, fixing an encoding helps us to compute these three groups, as we show below.

For $u,v\in\mathbb{Z}_N^n$, $u\star v$ denotes the componentwise product of $u$ and $v$. For a set $A\subseteq\mathbb{Z}_N^n$ and $r\geq 1$, we write
$$A^{(r)}=\{v_1\star\cdots\star v_r\ :\ v_i\in A, v_i\neq v_j,\ 1\leq i<j\leq r\}.$$
We define $A^{(0)}=\{(1,\ldots,1)\}$ and $A^{\perp_N}=\{w\in\mathbb{Z}_N^n\ :\ v\cdot w=0\ \text{ for all } v\in A\}$. Observe that for a linear code $C$, $C^{(r)}$ is distinct from $C^r=\underbrace{C\star\cdots\star C}_{r\text{ times}}$.

\begin{theorem}\label{thm.main}
Let $C_2\subseteq C_1\subseteq\mathbb{Z}_2^n$ be linear codes with bases $\beta_2\subseteq\beta_1$, respectively, and $Q=Q(C_1,C_2)$ the corresponding CSS code. If $N=2^\ell$, then
$$H_N=\left(\beta_2\star\bigcup_{i=0}^{\ell-1}\left(2^i\beta_1^{(i)}\right)\right)^{\perp_N}.$$
\end{theorem}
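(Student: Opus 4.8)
The plan is to characterize membership in $H_N$ directly from the condition $U(b)Q = Q$, using the explicit basis of $Q$ from Lemma~\ref{lem.propcss} together with the analysis of the logical action in Lemma~\ref{Lemma.action}. Since $y_z = 0$, the logical state $|v\rangle_L$ is $\frac{1}{\sqrt{|C_2|}}\sum_{u\in C_2}|u + w\rangle$ where $w = Bv$; applying $U(b)$ multiplies the term $|u+w\rangle$ by $\omega^{b\cdot(u+w)}$ (where $u + w$ is computed in $\mathbb{F}_2^n$ and then lifted to $\{0,1\}^n \subseteq \mathbb{Z}_N^n$). By the orthonormality argument already used in the proof of Lemma~\ref{Lemma.action}, $U(b)Q = Q$ holds if and only if for every coset representative $w$ the phase $\omega^{b\cdot(u+w)}$ is independent of $u \in C_2$, i.e. $b\cdot(u+w) \equiv b\cdot w \pmod N$ for all $u \in C_2$. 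Equivalently, it suffices (taking $w = 0$, which is a valid representative of $C_2$ itself) to require $b \cdot u \equiv b\cdot(u+w) - b\cdot w$... but the subtlety is that $+$ in $u+w$ is mod-2 addition, not the lift. So the real content is: $U(b)Q=Q$ iff for all $u, u' \in C_1$ (thinking of $u = c_2 + w$, $u' = w$ with $c_2 \in C_2$), the quantity $b\cdot(u \oplus u') - b\cdot u + b\cdot u'$... I need to set this up carefully. The cleanest route: $U(b)Q = Q$ iff for every $w \in L$ and every $u \in C_2$, $b\cdot((u \oplus w)) = b\cdot w$ in $\mathbb{Z}_N$, where on the left $u\oplus w$ is the mod-2 sum lifted to $\{0,1\}^n$.

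The key combinatorial identity to exploit is the "carry" formula for mod-2 addition lifted to integers: for $x, y \in \{0,1\}^n$, one has $(x \oplus y)_j = x_j + y_j - 2x_jy_j$ as integers, hence $b\cdot(x\oplus y) = b\cdot x + b\cdot y - 2\,b\cdot(x\star y)$. Iterating this for a sum $u = u^{(1)} \oplus \cdots \oplus u^{(m)}$ of basis vectors from $\beta_2$ (and similarly mixing with the representative $w$, a sum of vectors from $\beta_1 \setminus \beta_2$) produces an inclusion–exclusion expansion whose terms are $2^{i}\,b\cdot(u^{(j_0)} \star u^{(j_1)} \star \cdots \star u^{(j_i)})$. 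Working modulo $N = 2^\ell$, all terms with $i \geq \ell$ vanish. So the condition $b\cdot(u\oplus w) \equiv b\cdot w \pmod N$ for all $u\in C_2$, all $w$, unwinds — after canceling the $i=0$ contributions that give $b\cdot u + (\text{linear terms in } w)$ matched on both sides, and using that $C_2 \ni 0$ lets us also isolate pure-$C_2$ terms — into the vanishing mod $N$ of every expression $2^{i}\, b\cdot(\gamma_0 \star \gamma_1 \star \cdots \star \gamma_i)$ where exactly one of the $\gamma$'s (say $\gamma_0$) ranges over $\beta_2$ and the remaining $i$ of them are distinct elements of $\beta_1$, for $0 \le i \le \ell - 1$. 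That is precisely the statement $b \in \big(\beta_2 \star \bigcup_{i=0}^{\ell-1} 2^i\beta_1^{(i)}\big)^{\perp_N}$, once one checks that $2^i (v_1\star\cdots\star v_i)$ for $v_j \in \beta_1$ with $i$ of them (distinct) is exactly $2^i\beta_1^{(i)}$ and the extra factor $\gamma_0 \in \beta_2$ is the $\beta_2\star(-)$.

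Concretely I would carry out the steps as follows. First, reduce $U(b)Q = Q$ to the phase-independence condition via the orthonormal basis and the argument in Lemma~\ref{Lemma.action}. Second, establish the carry lemma $b\cdot(x\oplus y) = b\cdot x + b\cdot y - 2b\cdot(x\star y)$ and its multi-term generalization (an inclusion–exclusion / Möbius expansion over subsets), being careful that $x\star x = x$ so repeated indices collapse — this is why one only sees the square-free products $\beta_1^{(i)}$ and not arbitrary powers. Third, plug a general element of $C_2$ written in the basis $\beta_2$ and a general representative $w$ written using $\beta_1 \setminus \beta_2$ into the phase-independence condition, expand mod $2^\ell$, cancel the matching low-order terms, and read off that the condition is equivalent to the simultaneous vanishing (mod $N$) of the dot products of $b$ with all the generators listed. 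Fourth, verify that the set of generators so obtained is exactly $\beta_2 \star \bigcup_{i=0}^{\ell-1}(2^i\beta_1^{(i)})$: a generator has the form $2^i\,(e \star f_1 \star \cdots \star f_i)$ with $e \in \beta_2$ and $f_1,\dots,f_i$ distinct elements of $\beta_1$ (note $e$ need not be distinct from the $f_j$, but since $e\star e = e$ that is harmless, and the $i = 0$ case gives $\beta_2$ itself since $\beta_1^{(0)} = \{(1,\dots,1)\}$).

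The main obstacle I anticipate is bookkeeping in the inclusion–exclusion expansion and, more delicately, making sure the "one factor from $\beta_2$, the rest from $\beta_1$" structure emerges cleanly rather than, say, "all factors from $\beta_1$" or "$\ge 1$ factor from $\beta_2$." The point is that the condition is about phase independence across the coset $C_2 + w$, so at least one $\beta_2$-generator must appear in each relevant monomial (monomials built purely from $\beta_1 \setminus \beta_2$ cancel because they appear identically in $b\cdot(u\oplus w)$ and $b\cdot w$), while two-or-more $\beta_2$ factors can be reduced to one via idempotency of $\star$ combined with the fact that any $\mathbb{F}_2$-linear combination of $\beta_2$ elements again lies in $C_2$ — so those higher terms are consequences of the stated ones and need not be listed separately. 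Getting this reduction airtight, and confirming nothing is lost when passing between "for all $u\in C_2$" and "for all subsets of $\beta_2$," is where the care is needed; the rest is the routine truncation $2^i \equiv 0 \pmod{2^\ell}$ for $i\ge\ell$.
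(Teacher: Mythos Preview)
Your proposal is correct and follows essentially the same approach as the paper: both reduce $U(b)Q=Q$ to the phase-independence condition $b\cdot(u+w)\equiv b\cdot w\pmod N$ via Lemma~\ref{Lemma.action}, then exploit the carry identity $x+y=x\oplus y\ominus 2x\star y$ to expand into products of basis vectors truncated at level $\ell$. The paper organizes the forward inclusion as an iterative deduction (set $w=0$, then substitute $v=v_1+v_2$, etc.) rather than your single inclusion--exclusion expansion plus M\"obius inversion, and for the reverse inclusion it writes out the same expansion you describe; your slightly muddled remark about ``two-or-more $\beta_2$ factors'' is resolved simply by $\beta_2\subseteq\beta_1$, so any such product already lies in $\beta_2\star\beta_1^{(i)}$.
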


    \begin{proof} Recall that we distinguish the sum in $\mathbb{Z}_N$ using $\oplus$ from the sum in $\mathbb{Z}_2$, denoted by $+$. Thus, for $v,w\in\mathbb{Z}_2^n$, we have
    \begin{equation}\label{eq.2tom}
    v+w=v\oplus w\ominus 2v\star w.
    \end{equation}
    We use Equation~(\ref{eq.2tom}) several times to expand an element of $C_1$ in terms of the selected basis. Take $b\in \mathbb{Z}_N^n$. By the proof of Lemma \ref{Lemma.action} (in particular, Equation \eqref{eq:coeffs}), we have that $b\in H_N$ if and only if for all $v\in C_2$ and $w\in C_1\setminus C_2$,
$$\omega^{b\cdot(v+w)}=\omega^{b\cdot w}.$$
    Using Equation (\ref{eq.2tom}), the previous statement is valid if and only if for all $v\in C_2$ and any $w\in C_1\setminus C_2$,
\begin{equation}\label{eq.in}
    b\cdot(v\ominus 2 w\star v)=0\bmod N.
    \end{equation} 
    Now we assume $b\in H_N$. In particular, Equation (\ref{eq.in}) implies that for $w=0$ and any $v\in C_2$,
    \begin{equation}\label{eq.i0}
    b\cdot v=0\bmod N.
    \end{equation}

    Combining Equations (\ref{eq.in}) and ($\ref{eq.i0}$), we get
    \begin{equation}\label{eq.i1}
    2b\cdot(v\star w)=0\bmod N.
    \end{equation}

    If $v_1,v_2\in C_2$ and $v=v_1+v_2$, by plugging this $v$ into Equation (\ref{eq.i0}) and expanding using Equation (\ref{eq.2tom}), we get that Equation (\ref{eq.i1}) is valid for $v\in C_2$ and any $w\in C_1$.

    By repeatedly applying the same arguments, we can obtain the next set of equations for any $v\in C_2$ and any $w_1,\ldots,w_{\ell-1}\in C_1$: 
    $$4b\cdot(v\star w_1\star w_2)=0\bmod N$$
    $$8b\cdot(v\star w_1\star w_2\star w_3)=0\bmod N$$
    $$\vdots$$
$$2^{\ell-1}b\cdot(v\star w_1\star \cdots\star w_{\ell-1})=0\bmod N.$$
    We can continue beyond $\ell$. Since $N=2^\ell$, the equations are trivially satisfied. Therefore, for any $v\in \beta_2$ and $w\in 2^i\beta_1^{(i)}$, $b\cdot v\star w=0\bmod N$, which proves that $\mathrm{H}_N\subset \left( \beta_2\star\bigcup_{i=0}^{\ell-1}(2^i\beta_1^{(i)})\right)^{\perp_N}$.

    To conclude, let $S_1\subset\beta_1\setminus\beta_2$ and $S_2\subset\beta_2$ such that $v=\sum_{u\in S_2} u$ and $w=\sum_{u\in S_1} u$, i.e., 
    $$v\ominus 2 w\star v=\bigoplus_{i=1}^{|S_2|}\left\{(-2)^{i-1}u\ :\ u\in S_2^{(i)}\right\}\oplus\bigoplus_{i=1}^{|S_2|}\bigoplus_{j=1}^{|S_1|}\left\{ 2^{i+j-1} u\star u'\ :\ u\in S_2^{(i)},\ u'\in S_1^{(j)}\right\}.$$
    Since each one of the terms in this sum is in $2^i\beta_2\star\beta_1^{(i)}$ for some $0\leq i\leq\max\{\ell-1,|S_2|+|S_1|-1\}$, if $b\in\left\{ \beta_2\star\bigcup_{i=0}^{\ell-1}(2^i\beta_1^{(i)})\right)^{\perp_N}$, we have
    $$b\cdot(v\ominus 2w\star v)=0\bmod N.$$
    By Equation~(\ref{eq.in}), this implies $b\in\mathrm{H}_N$, concluding the proof.      
\end{proof}

\begin{remark}\rm
In the next results, we demonstrate that our approach provides more structural information about the set $H_N$ for a given code $C_1$ than that given in~\cite{Webster-23}. Similarly, we can obtain certain properties of $C_1$ by asking specific elements to appear in $H_N$. In~\cite{Webster-23}, the authors studied the effect of $U(b)$ over the basis of $Q(C_1, C_2, 0, 0)$ to characterize $b$ as an element of a kernel (in the case $U(b)$ is acting as a logical identity~\cite[Algorithm 1]{Webster-23}) or an element in the intersection of some spaces (\cite[Algorithm 4]{Webster-23}). Here, we compute a matrix for which the kernel is $H_N$. But computing the full set of transversal operators in both cases has the same complexity (since in both cases we need a complete understanding of the codewords of $C_1\star C_2$). 
\end{remark}

The level defined by $2^{\ell-1}\beta_2\star\beta_1^{(\ell-1)}$ gives linear conditions on the codes $C_2$ and $C_1$ that can be used to characterize codes with certain gates. 

\begin{corollary}
Let $C_2\subseteq C_1\subseteq\mathbb{F}_2^n$ be linear codes and $Q=Q(C_1,C_2)$ the corresponding CSS code. If $(1,\ldots,1)\in H_N$, then $C_2\subseteq (C_1^{\ell-1})^{\perp}$.
\end{corollary}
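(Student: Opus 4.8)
The plan is to specialize Theorem~\ref{thm.main} to the all-ones vector $\mathbf{1}=(1,\ldots,1)$ and read off what the membership $\mathbf{1}\in H_N$ forces. By the theorem, $\mathbf{1}\in H_N$ is equivalent to $\mathbf{1}\cdot(v\star w)=0 \bmod N$ for every generator $v\star w$ with $v\in\beta_2$ and $w\in 2^i\beta_1^{(i)}$, $0\le i\le \ell-1$. First I would observe that for the all-ones vector, the inner product $\mathbf{1}\cdot (v\star w_1\star\cdots\star w_i)$ is simply the Hamming weight of the componentwise product $v\star w_1\star\cdots\star w_i\in\mathbb{F}_2^n$ (counted as an integer, not mod $2$), since each coordinate contributes $0$ or $1$. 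So the conditions become weight divisibility conditions on products of basis elements.

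The key observation is that I only need the top level $i=\ell-1$. Indeed, for $w_1,\ldots,w_{\ell-1}\in\beta_1$ distinct and $v\in\beta_2$, Theorem~\ref{thm.main} gives $2^{\ell-1}\,\mathbf{1}\cdot(v\star w_1\star\cdots\star w_{\ell-1})\equiv 0 \bmod 2^\ell$, i.e. $\mathbf{1}\cdot(v\star w_1\star\cdots\star w_{\ell-1})\equiv 0\bmod 2$. In other words, $v\star w_1\star\cdots\star w_{\ell-1}$ has even weight over $\mathbb{F}_2$, which is exactly the statement that $v$ is orthogonal (over $\mathbb{F}_2$) to $w_1\star\cdots\star w_{\ell-1}$. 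Next I would note that, as $w_1,\ldots,w_{\ell-1}$ range over all (not necessarily distinct) elements of $\beta_1$, the products $w_1\star\cdots\star w_{\ell-1}$ generate $C_1^{\ell-1}$ as an $\mathbb{F}_2$-vector space; here one must also handle repeated factors, but $w\star w = w$ in $\mathbb{F}_2^n$, so a product with repetitions equals a product of the distinct ones appearing, hence lies in the span of the $\beta_1^{(j)}$ for $j\le \ell-1$, and the distinct-factor case is precisely what Theorem~\ref{thm.main} supplies. Therefore every $v\in\beta_2$ is orthogonal to every element of $C_1^{\ell-1}$, and since $\beta_2$ spans $C_2$ and orthogonality is linear, $C_2\subseteq (C_1^{\ell-1})^\perp$.

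Concretely, the steps in order are: (1) write out $\mathbf{1}\in H_N$ via Theorem~\ref{thm.main}; (2) translate $\mathbf{1}\cdot(\,\cdot\,)$ into integer Hamming weight; (3) extract the $i=\ell-1$ equations and reduce $2^{\ell-1}x\equiv 0\bmod 2^\ell$ to $x\equiv 0\bmod 2$; (4) reinterpret "even weight of $v\star w_1\star\cdots\star w_{\ell-1}$" as $v\perp (w_1\star\cdots\star w_{\ell-1})$ over $\mathbb{F}_2$; (5) note products of $\ell-1$ basis vectors of $C_1$ (with repetition collapsing via idempotency) span $C_1^{\ell-1}$; (6) conclude by linearity that $C_2\subseteq (C_1^{\ell-1})^\perp$. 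The main obstacle I anticipate is the bookkeeping in step (5): $\beta_1^{(\ell-1)}$ as defined uses \emph{distinct} factors, whereas $C_1^{\ell-1}=C_1\star\cdots\star C_1$ a priori involves arbitrary products; one must argue that over $\mathbb{F}_2$ the idempotency $w\star w=w$ makes a product of $\ell-1$ basis elements with repetitions equal to a product of $j\le\ell-1$ distinct ones, and that such lower-level products $\beta_1^{(j)}\star\cdots$ are also in the orthogonality list — which they are, since the union in Theorem~\ref{thm.main} runs over all $i\le\ell-1$. Everything else is a routine unwinding of definitions.
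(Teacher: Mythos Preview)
Your proof is correct and follows essentially the same route as the paper's: extract the level-$(\ell-1)$ condition from Theorem~\ref{thm.main} with $b=\mathbf{1}$, read it as evenness of weights in $C_2\star C_1^{\ell-1}$, and conclude $\mathbb{F}_2$-orthogonality. The paper's version is terser because it invokes the intermediate step inside the proof of Theorem~\ref{thm.main} that already gives $2^{\ell-1}b\cdot(v\star w_1\star\cdots\star w_{\ell-1})\equiv 0\bmod N$ for \emph{all} $v\in C_2$ and $w_i\in C_1$ (not just basis elements), so your spanning/idempotency bookkeeping in step~(5) is bypassed there.
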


\begin{proof}
    Since $(1,\ldots,1)\in H_N$, in particular $2^{\ell-1}(1,\ldots,1)\cdot v=2^{\ell-1}v=0\bmod N$ for any $v\in C_2\star C_1^{\ell-1}$. This implies that $C_2\star (C_1^{\ell-1})$ is even, and we have the conclusion.
\end{proof}

\begin{remark}\rm
    Compare this with~\cite{campsCSST}, where the authors consider the case $\ell=3$ to obtain that a CSS code fixed by $T^{\otimes n}$ has to satisfy $C_2\subseteq (C_1^2)^\perp$.
\end{remark}

To understand the whole set of diagonal transversal gates, it is required to consider the entire set $\bigcup_{i=1}^\infty \beta_1^{(i)}$. For a generic code, this information is equivalent to knowing every codeword of $C_1$. As we will see, highly structured codes such as monomial codes are more amenable to analysis. Recall that a monomial code $C \subseteq \F_q^n, n=2^m$, is a linear code given by the image of an evaluation map $ev:\mathcal L \rightarrow \F_q^n$ defined by $ev(f):=\left( f(P_1), \dots, f(P_n) \right)$, where $\mathcal L$ is a vector space generated by monomials and $P_1, \dots, P_n \in \F_q^m$ are evaluation points.

\begin{example}\rm
    Let $C_2=RM(0,4)$ and $C_1=RM(1,4)$ be Reed-Muller codes of length $n=2^m=16$. A minimal generating set is $\{2^{|\mathbf{i}|}ev(x^\mathbf{i})\ :\ \mathbf{i}\in\{0,1\}^4\}$, where $x^\mathbf{i}=x_1^{i_1}\cdots x_4^{i_4}\in\mathbb{F}_2[x_1,x_2,x_3,x_4]$ and for $f$ a polynomial, $ev(f)=(f(P_1),\ldots,f(P_{16}))$, where $\mathbb{F}_2^4=\{P_1,\ldots,P_{16}\}$. 
\end{example}

In fact, for such codes, it is easy to compute the group $H_N$ with a result analogous to the one in~\cite{dualev}. Let $m\geq 2$ be an integer and $n=2^m$. Fix an order on $\mathbb{F}_2^m$, this is, $\mathbb{F}_2^m=\{P_1,\ldots,P_n\}$, and for any $f\in\mathbb{F}_2[x_1,\ldots,x_m]$, define $ev(f)=(f(P_1),\ldots,f(P_n))\in\mathbb{F}_2^n$. A monomial $u$ is said to be square-free if every variable that appears in $u$ has degree at most 1.  For a monomial $u\in\mathbb{F}_2[x_1,\ldots,x_m]$, $\overline{u}$ represents the square-free monomial supported in the same variables as $u$. Given two sets of square-free monomials $\mathcal{M}_2\subseteq\mathcal{M}_1\subseteq\mathbb{F}_2[x_1,\ldots,x_m]$, we define $$\mathcal{M}_2\mathcal{M}_1^{\ell-1}:=\{\overline{u_0 u_1\cdots u_{s}}\ :\ u_0\in\mathcal{M}_2,\ u_i\in\mathcal{M}_1,\ 1\leq i\leq s\leq\ell-1\}.$$

\begin{theorem}\label{thm.mainmon}
 Let $\mathcal{M}_2\subseteq\mathcal{M}_1\subseteq\mathbb{F}_2[x_1,\ldots,x_m]$ be two sets of square-free monomials of cardinalities $k_2<k_1$, respectively, such that $x_1\cdots x_m\notin \mathcal{M}_2\mathcal{M}_1^{\ell-1}$ and $\mathcal{M}_1^{\ell-1}\neq\mathcal{M}_1^{\ell-2}$ for $\ell\geq 1$. Define the codes $C_j=\mathrm{Span}_{\mathbb{Z}_2}\{ev(u)\ :\ u\in\mathcal{M}_j\}$, for $j=1,2$. Then, for the CSS code $Q=Q(C_1,C_2)$, $N=2^\ell$, and $1\leq i\leq \ell-1$, we have
\begin{align*}    H_N&=\mathrm{Span}_{\mathbb{Z}_N}\left(2^{\ell-1-i}ev\left(\frac{x_1\cdots x_m}{u}\right)\ :\ u\notin\mathcal{M}_2\mathcal{M}_1^{i}\ \text{and}\ u\in\mathcal{M}_2\mathcal{M}_1^{i+1}\ \text{or}\ i=\ell-1\right)\\
&=\mathrm{Span}_{\mathbb{Z}_N}\left(ev\left(\frac{x_1\cdots x_m}{u}\right)\ :\ u\notin\mathcal{M}_2\mathcal{M}_1^{\ell-1}\right)+2H_{N/2}\otimes\mathbb{Z}_{N}.
\end{align*}
\end{theorem}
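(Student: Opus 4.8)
The plan is to deduce Theorem~\ref{thm.mainmon} from the general formula for $H_N$ in Theorem~\ref{thm.main} by making the $\star$-product structure explicit in the monomial setting. The key algebraic fact is that for square-free monomials $u,v\in\mathbb{F}_2[x_1,\ldots,x_m]$ evaluated on all of $\mathbb{F}_2^m$, one has $ev(u)\star ev(v)=ev(\overline{uv})$, so componentwise products of basis evaluations are again evaluations of square-free monomials; moreover, the vectors $\{ev(u)\ :\ u\text{ square-free}\}$ form, up to the standard monomial/indicator change of basis, an orthogonal-type system with respect to the bilinear form. Concretely, $ev(u)\cdot ev(v)=ev(\overline{uv})\cdot(1,\ldots,1)=|\{P\in\mathbb{F}_2^m:\overline{uv}(P)=1\}|=2^{m-\deg\overline{uv}}\bmod 2^\ell$, which vanishes mod $N=2^\ell$ unless $\deg\overline{uv}\geq m-\ell+1$; in particular $ev(u)\cdot ev(v)$ is an odd multiple of $2^{m-\deg\overline{uv}}$, and it equals $1\bmod 2$ exactly when $\overline{uv}=x_1\cdots x_m$, i.e. when $v$ is ``complementary'' to $u$ inside $x_1\cdots x_m$ (they use disjoint variable supports covering all of $[m]$). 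This is the analogue of the duality result in~\cite{dualev} alluded to in the text.

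First I would rewrite the generating set from Theorem~\ref{thm.main}. Taking $\beta_j=\{ev(u):u\in\mathcal{M}_j\}$, the set $\beta_2\star 2^i\beta_1^{(i)}$ consists (by the square-free product rule) of vectors $2^i\,ev(\overline{u_0u_1\cdots u_i})$ with $u_0\in\mathcal{M}_2$, the $u_j\in\mathcal{M}_1$ distinct; as $u_0\overline{u_1\cdots u_i}$ ranges over such products its square-free reduction ranges over $\mathcal{M}_2\mathcal{M}_1^{i}$ (using the hypothesis $\mathcal{M}_1^{\ell-1}\neq\mathcal{M}_1^{\ell-2}$ to see that the $i$-fold products genuinely produce all of $\mathcal{M}_2\mathcal{M}_1^i$ up to the relevant level, and that distinctness of the $v_i$ in $A^{(i)}$ is not an obstruction since a repeated square-free factor disappears). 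Hence
$$\beta_2\star\bigcup_{i=0}^{\ell-1}\bigl(2^i\beta_1^{(i)}\bigr)=\bigcup_{i=0}^{\ell-1}\left\{2^i\,ev(w)\ :\ w\in\mathcal{M}_2\mathcal{M}_1^{i}\right\},$$
and since the levels are nested ($\mathcal{M}_2\mathcal{M}_1^{i}\subseteq\mathcal{M}_2\mathcal{M}_1^{i+1}$) we may record for each square-free monomial $w$ appearing the \emph{smallest} $i=i(w)$ with $w\in\mathcal{M}_2\mathcal{M}_1^{i}$ and only keep the constraint $2^{i(w)}ev(w)\cdot b=0\bmod N$ (the higher-$i$ copies are weaker), with the convention $i(w)=\ell-1$ forced when $w$ is not reached by level $\ell-1$. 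So $H_N$ is the set of $b$ orthogonal mod $N$ to all these $2^{i(w)}ev(w)$.

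Next I would solve this linear system explicitly using the pairing computation above. Because $ev(w)\cdot ev(w')$ is an odd multiple of $2^{m-\deg\overline{ww'}}$, and is a unit mod $2$ precisely when $\overline{ww'}=x_1\cdots x_m$, the ``dual basis'' to $\{ev(w)\}$ is realized by the complementary evaluations $ev(x_1\cdots x_m/w')$: the constraint $2^{i}ev(w)\cdot b\equiv0$ should be read as killing the coefficient of $ev(x_1\cdots x_m/w)$ in the expansion of $b$ modulo $2^{\ell-i}$. Running over all $w$, a vector $b$ lies in $H_N$ iff, writing $b$ in the $ev(\text{square-free monomial})$ basis, the coefficient attached to $ev(x_1\cdots x_m/u)$ is divisible by $2^{\ell-i(u)}$; equivalently $H_N$ is spanned over $\mathbb{Z}_N$ by $\{2^{\ell-1-i}ev(x_1\cdots x_m/u): u\notin\mathcal{M}_2\mathcal{M}_1^{i},\ u\in\mathcal{M}_2\mathcal{M}_1^{i+1}\ \text{or}\ i=\ell-1\}$, which is the first displayed formula. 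One uses $x_1\cdots x_m\notin\mathcal{M}_2\mathcal{M}_1^{\ell-1}$ to guarantee the constant function $ev(1)=(1,\ldots,1)$ is available as a generator (the $u=x_1\cdots x_m$, $i=\ell-1$ term) and is not over-constrained. The second formula follows by separating the $i=\ell-1$ generators, namely $\{ev(x_1\cdots x_m/u):u\notin\mathcal{M}_2\mathcal{M}_1^{\ell-1}\}$ with coefficients in $\mathbb{Z}_N$, from the remaining generators, which all carry a factor of $2$ and whose span is exactly $2\,H_{N/2}\otimes\mathbb{Z}_N$ by induction on $\ell$ (the conditions defining $H_{N/2}$ are precisely the levels $i\le\ell-2$).

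The main obstacle I anticipate is bookkeeping the nested-level indexing together with the distinctness requirement in $A^{(i)}$: I must argue carefully that passing from $\beta_1^{(i)}$ (products of \emph{distinct} $ev(u)$, $u\in\mathcal{M}_1$) to $\mathcal{M}_2\mathcal{M}_1^i$ (square-free reductions of arbitrary products) loses nothing — squares collapse harmlessly, but one needs $\mathcal{M}_1^{\ell-1}\neq\mathcal{M}_1^{\ell-2}$ and the non-containment of $x_1\cdots x_m$ to ensure the induction on $\ell$ closes and that no generator is both required to be a unit multiple and divisible by $2$. The orthogonality computation $ev(u)\cdot ev(v)=2^{m-\deg\overline{uv}}$ and the resulting ``complementary monomial'' dual basis is the conceptual engine and should be isolated as a short lemma; once it is in hand, the rest is the linear-algebra translation described above.
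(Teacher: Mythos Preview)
Your plan diverges from the paper chiefly in how the \emph{reverse} inclusion $H_N\subseteq H'_N$ is obtained. The paper first checks the easy direction $H'_N\subseteq H_N$ (your pairing computation $ev(u)\cdot ev(v)=2^{m-\deg\overline{uv}}$ is exactly what they use), and then finishes by a module-length count: they compute $\lambda((H_N)^{\perp_N})$ from the constraint set and $\lambda(H'_N)$ from the claimed generators (using the recursive description $H'_N=S+2H_{N/2}\otimes\mathbb{Z}_N$), and observe these are complementary in $n\ell$. No explicit solution of the linear system is attempted.

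Your route instead tries to \emph{solve} the system directly via a ``dual basis''. The problem is the key sentence: ``the constraint $2^{i}ev(w)\cdot b\equiv0$ should be read as killing the coefficient of $ev(x_1\cdots x_m/w)$ in the expansion of $b$ modulo $2^{\ell-i}$.'' This is not true as stated. The pairing matrix $P_{w',w}=ev(w')\cdot ev(x_1\cdots x_m/w)=2^{|\operatorname{supp}(w)\setminus\operatorname{supp}(w')|}$ is not diagonal: modulo~$2$ it is lower triangular with unit diagonal (nonzero precisely when $w\mid w'$), and over $\mathbb{Z}_N$ the above-diagonal entries are even but nonzero. So each constraint couples $c'_{w'}$ to all $c'_w$ with $w\mid w'$ (and, at higher $2$-adic precision, to further coefficients). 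To decouple you would need a row reduction that respects the weights $2^{i(w')}$, which in turn forces a monotonicity of $i(\cdot)$ along divisibility chains that you have not established and that need not hold for arbitrary $\mathcal{M}_1,\mathcal{M}_2$. Your paragraph on ``bookkeeping'' correctly senses this obstacle but does not resolve it.

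Two concrete options to close the gap: either (i) keep your forward inclusion and replace your Step~3 by the paper's length count (which only needs that $\{ev(u)\}$ is a $\mathbb{Z}_N$-basis, so the constraint module is free with the obvious length), or (ii) make your approach rigorous by changing to the genuine dual basis via M\"obius inversion over the Boolean lattice, i.e.\ replace $ev(x_1\cdots x_m/w)$ by $\sum_{w\mid w''}\mu(w,w'')\,ev(x_1\cdots x_m/w'')$, and then argue that this unipotent change of variables preserves the filtration $2^{\ell-1-i}\,\mathbb{Z}_N$ on coefficients. The second option is doable but is real work; the first is what the paper does and is considerably shorter.
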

\begin{proof}
Observe that for any square-free monomial $u$, $wt(ev(u))=2^{m-\deg u}$. Now, let $\beta_1=\{ev(u):u \in \mathcal{M}_1\}$ and $\beta_2=\{ev(u):u\in\mathcal{M}_2\}$. For any $u_0\in\mathcal{M}_2$ and any $u_1,\ldots,u_s\in\mathcal{M}_1$, all of them distinct, and a square-free monomial $v$ such that $\frac{x_1\cdots x_m}{v}\notin \mathcal{M}_2\mathcal{M}_1^{s}$,  we have that
$$2^s ev(u_0\cdots u_s)\cdot ev(v)=2^{s} wt(ev(u_0\cdots u_s\, v))=2^{s}(2^{m-\deg\overline{u_0\cdots u_s\, v}}).$$    
Since $vu\neq x_1\ldots x_m$ for any $u\in \mathcal{M}_2\mathcal{M}_1^{s}$, then
$$m-\deg(\overline{u_0\cdots u_s\, v})\geq 1,$$
from which we get
$$2^s ev(u_0\cdots u_s)\cdot ev(v)\geq 2^{s+1}.$$
Thus, the codeword $2^{\ell-s-1}ev(v)$ satisfies $(2^{\ell-s-1}ev(v))\cdot(2^s ev(u_0\cdots u_s))=2^{\ell-1}wt(ev(v\cdot u_0\cdots u_s))\bmod N=0$ since this weight is a power of $2$. Then
$$2^{\ell-1-s}ev(v)\in\left(\beta_2\star\bigcup_{i=0}^{\ell-1}\left(2^i\beta_1^{(i)}\right)\right)^{\perp_N}$$
and $2^{\ell-1-s}ev(v)\in H_N$ by Theorem~\ref{thm.main}. Thus,
$$\mathrm{Span}_{\mathbb{Z}_N}\left(2^{\ell-1-i}ev\left(\frac{x_1\cdots x_m}{u}\right)\ :\ u\notin\mathcal{M}_2\mathcal{M}_1^{i}\ \text{and}\ u\in\mathcal{M}_2\mathcal{M}_1^{i+1}\ \text{or}\ i=\ell-1\right)\subseteq H_N.$$

Denote by $H'_N$ the module on the left-hand side and $S:=\mathrm{Span}_{\mathbb{Z}_N}\left(ev\left(\frac{x_1\cdots x_m}{u}\right)\ :\ u\notin\mathcal{M}_2\mathcal{M}_1^{\ell-1}\right)$. Observe that the module $(H_N)^{\perp_N}$ has length
    $$\ell|\mathcal{M}_2|+\sum_{i=1}^{\ell-1} (\ell-i)|\mathcal{M}_2\mathcal{M}_1^i\setminus\mathcal{M}_2\mathcal{M}_1^{i-1}|=|\mathcal{M}_2|+\sum_{i=1}^{\ell-1}|\mathcal{M}_2\mathcal{M}_1^i|,$$

    \noindent where the equality follows from the fact that $\mathcal{M}_2\mathcal{M}_1^{i-1}\subsetneq\mathcal{M}_2\mathcal{M}_1^i$. On the other hand, observe that $2S\subseteq 2H_{N/2}\otimes\mathbb{Z}_{N/2}$ since $(\mathcal{M}_2\mathcal{M}_1^{\ell-1})^c\subseteq(\mathcal{M}_2\mathcal{M}_1^{\ell-2})^c$. This implies that
    \begin{align*}
    \lambda(H'_N)=&\lambda(S)+\lambda(2H_{N/2}\otimes\mathbb{Z}_N)-\lambda(2S)\\=&\ell(2^m-|\mathcal{M}_2\mathcal{M}_1^{\ell-1}|)+\lambda(2H_{N/2}\otimes\mathbb{Z}_N)-(\ell-1)(2^m-|\mathcal{M}_2\mathcal{M}_1^{\ell-1}|)\\=&2^m-|\mathcal{M}_2\mathcal{M}_1^{\ell-1}|+\lambda(2H_{N/2}\otimes\mathbb{Z}_N).\end{align*}

   Since the length of $M^{\perp_N}$ for any module $M\subseteq \mathbb{Z}_N^n$ is $\ell n-\lambda(M)$, we have that $\lambda(H'_N)=\lambda(H_N)$, and the conclusion follows.
\end{proof}

We can remove the extra hypothesis to apply the result to any $\ell$.

\begin{proposition}
Let $m\geq 2$ be an integer and let $\mathbb{F}_2^m=\{P_1,\ldots,P_n\}$. Let $\mathcal{M}_2\subseteq\mathcal{M}_1\subseteq\mathbb{F}_2[x_1,\ldots,x_m]$ be two sets of square-free monomials of cardinalities $k_2<k_1$, respectively. Let $C_j=\mathrm{Span}_{\mathbb{Z}_2}\{ev(u)\ :\ u\in\mathcal{M}_j\}$, for $j =1, 2$. Assume $P_n=(1,\ldots,1)$, and let $\ell$ be the minimum such that $x_1\cdots x_m\in\mathcal{M}_2\mathcal{M}_1^{\ell-1}$. If $|\mathcal{M}_2\mathcal{M}_1^{\ell-1}|\neq 2^m$, then
$$H_N=\mathrm{Span}_{\mathbb{Z}_N}\left(ev(u)\ominus ev(1)\ :\ u\notin\left\{\frac{x_1\cdots x_m}{w}\ :\ w\in\mathcal{M}_2\mathcal{M}_1^{\ell-1}\setminus\{x_1\cdots x_m\}\right\}\right)+2H_{N/2}\otimes\mathbb{Z}_N.$$
\end{proposition}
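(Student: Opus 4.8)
The plan is to mirror the proof of Theorem~\ref{thm.mainmon}, reducing the ``lower levels'' to that theorem and handling only the top level by hand. First, since $\ell$ is minimal with $x_1\cdots x_m\in\mathcal{M}_2\mathcal{M}_1^{\ell-1}$, one has $x_1\cdots x_m\notin\mathcal{M}_2\mathcal{M}_1^{\ell-2}$; moreover if $\mathcal{M}_2\mathcal{M}_1^{j-1}=\mathcal{M}_2\mathcal{M}_1^{j}$ for some $j$ then $\mathcal{M}_2\mathcal{M}_1^{i}=\mathcal{M}_2\mathcal{M}_1^{j-1}$ for all $i\ge j-1$, contradicting minimality, so $\mathcal{M}_2\mathcal{M}_1^{j-1}\subsetneq\mathcal{M}_2\mathcal{M}_1^{j}$ for $1\le j\le\ell-1$. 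Hence Theorem~\ref{thm.mainmon} applies verbatim with $\ell$ replaced by $\ell-1$ and $N$ by $N/2$, which I would use as the recursive input for $H_{N/2}$.

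Next I would prove the inclusion ``$\supseteq$''. That $2H_{N/2}\otimes\mathbb{Z}_N\subseteq H_N$ follows from Theorem~\ref{thm.main}: if $b\in H_{N/2}$, then for each generator $2^{i}ev(w)$ (with $0\le i\le\ell-2$) of the set defining $H_{N/2}^{\perp_{N/2}}$ one has $2^{i}(b\cdot ev(w))\equiv 0\bmod N/2$, so $2^{i+1}(\tilde b\cdot ev(w))\equiv 0\bmod N$ for any lift $\tilde b$, while the $i=\ell-1$ generators are annihilated automatically; thus $2\tilde b\in H_N$. For the explicit generators I would use $H_N=\bigl(\beta_2\star\bigcup_{i=0}^{\ell-1}2^{i}\beta_1^{(i)}\bigr)^{\perp_N}$ with $\beta_j=\{ev(u):u\in\mathcal{M}_j\}$, together with the elementary identity
\[
(ev(u)\ominus ev(1))\cdot ev(\mu)=wt(ev(\overline{u\mu}))-wt(ev(\mu))=2^{\,m-\deg\overline{u\mu}}-2^{\,m-\deg\mu}
\]
valid for any square-free $u,\mu$. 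Since a minimal factorization of $\mu\in\mathcal{M}_2\mathcal{M}_1^{\ell-1}$ uses distinct factors, the generating set is spanned by the $2^{s(\mu)}ev(\mu)$ with $s(\mu):=\min\{i:\mu\in\mathcal{M}_2\mathcal{M}_1^{i}\}$; pairing $ev(u)\ominus ev(1)$ with such a generator gives either $0$ (when $\operatorname{supp}(u)\subseteq\operatorname{supp}(\mu)$) or a quantity of $2$-adic valuation $s(\mu)+m-\deg\overline{u\mu}$, and a bookkeeping of these valuations over all admissible $\mu$ shows the pairing always lies in $N\mathbb{Z}$ exactly when $u$ is not of the form $x_1\cdots x_m/w$ with $w\in\mathcal{M}_2\mathcal{M}_1^{\ell-1}\setminus\{x_1\cdots x_m\}$ (the necessity being witnessed by $\mu=x_1\cdots x_m/u$). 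This yields $ev(u)\ominus ev(1)\in H_N$ for all such $u$, hence ``$\supseteq$''.

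For equality I would then compare $\mathbb{Z}_N$-module lengths, exactly as in Theorem~\ref{thm.mainmon}, using that $\{ev(\nu):\nu\ \text{square-free}\}$ is a $\mathbb{Z}_N$-basis of $\mathbb{Z}_N^{\,n}$ (its evaluation matrix is unitriangular in the subset order) and that $\lambda(M^{\perp_N})=\ell n-\lambda(M)$. On one side, Theorem~\ref{thm.main} gives $H_N^{\perp_N}=\bigoplus_{\mu\in\mathcal{M}_2\mathcal{M}_1^{\ell-1}}2^{s(\mu)}\mathbb{Z}_N\,ev(\mu)$, whence $\lambda(H_N^{\perp_N})=\sum_{\mu}(\ell-s(\mu))=\sum_{i=0}^{\ell-1}|\mathcal{M}_2\mathcal{M}_1^{i}|=|\mathcal{M}_2|+\sum_{i=1}^{\ell-1}|\mathcal{M}_2\mathcal{M}_1^{i}|$ --- the very computation already carried out in the proof of Theorem~\ref{thm.mainmon}, which does not use $x_1\cdots x_m\notin\mathcal{M}_2\mathcal{M}_1^{\ell-1}$. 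On the other side I would change from $\{ev(\nu)\}$ to the (also unimodular) basis $\{ev(1)\}\cup\{ev(\nu)\ominus ev(1):\nu\ne 1\}$: in it the explicit generators are basis vectors, and the generators of $2H_{N/2}\otimes\mathbb{Z}_N$ coming from Theorem~\ref{thm.mainmon} reduce, after subtracting suitable multiples, to diagonal form, so the right-hand side is an internal direct sum of cyclic modules whose length, after inserting $\lambda(H_{N/2})$ and simplifying the $s(\mu)$-count by Abel summation, is again $\ell n-|\mathcal{M}_2|-\sum_{i=1}^{\ell-1}|\mathcal{M}_2\mathcal{M}_1^{i}|=\lambda(H_N)$. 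Together with ``$\supseteq$'', equal lengths force equality.

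The hard part is the middle step: identifying the exact combinatorial condition on $u$ under which $ev(u)\ominus ev(1)$ annihilates every element of $\beta_2\star\bigcup_i 2^{i}\beta_1^{(i)}$. In contrast with Theorem~\ref{thm.mainmon}, the pairing of the shifted vector $ev(u)\ominus ev(1)$ with $2^{s(\mu)}ev(\mu)$ can be nonzero even when $\overline{u\mu}\ne x_1\cdots x_m$, so one must control the two competing powers $2^{s(\mu)+m-\deg\overline{u\mu}}$ and $2^{s(\mu)+m-\deg\mu}$ simultaneously, for every admissible $\mu$ and every admissible number of factors, and check that the excluded set $\{x_1\cdots x_m/w: w\in\mathcal{M}_2\mathcal{M}_1^{\ell-1}\setminus\{x_1\cdots x_m\}\}$ is neither too small nor too large --- all without the divisibility/downward-closure features available for decreasing monomial codes. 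Once that is pinned down, the direct-sum decompositions and the length bookkeeping are routine.
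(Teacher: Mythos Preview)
Your overall architecture---prove ``$\supseteq$'' and then compare $\mathbb{Z}_N$-module lengths---is precisely the paper's, and your handling of $2H_{N/2}\otimes\mathbb{Z}_N\subseteq H_N$ and of the length bookkeeping matches the paper's argument. The divergence, and the place where you leave a genuine gap, is the step you yourself flag as hard: showing that each $ev(u)\ominus ev(1)$ lies in $H_N$.

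You attack the pairing of $ev(u)\ominus ev(1)$ with $2^{s(\mu)}ev(\mu)$ as a single object and try to bound the $2$-adic valuation of $2^{m-\deg\overline{u\mu}}-2^{m-\deg\mu}$ simultaneously for all admissible $\mu$, conceding that this combinatorial control is not finished. The paper does not do any of this. It instead treats the two terms separately: for every $\mu\in\mathcal{M}_2\mathcal{M}_1^{\ell-1}$ with $\mu\neq x_1\cdots x_m$, it invokes the computation already carried out in the proof of Theorem~\ref{thm.mainmon} to conclude that $2^{s(\mu)}ev(\mu)\cdot ev(u)\equiv 0\bmod N$ directly (using only that $\tfrac{x_1\cdots x_m}{u}\notin\mathcal{M}_2\mathcal{M}_1^{\ell-1}$), and likewise for $ev(1)$; hence the difference kills every such $\mu$ with no further work. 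The \emph{only} generator for which the subtraction of $ev(1)$ is genuinely needed is $\mu=x_1\cdots x_m$ itself, and there it is a one-liner: $ev(x_1\cdots x_m)=e_n$, and since $P_n=(1,\dots,1)$ one has $u(P_n)=1=1(P_n)$, so $(ev(u)\ominus ev(1))\cdot e_n=0$. In short, what you isolate as the hard step the paper dispatches by a term-by-term appeal to Theorem~\ref{thm.mainmon} plus a single special case, rather than by a global valuation analysis of the difference. Your route could be made to work, but it is considerably more laborious than the separation argument the paper uses.
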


\begin{proof}
    The proof of Theorem~\ref{thm.mainmon} proves that for any $u'\in\mathcal{M}_2\mathcal{M}_1^{\ell-1}\setminus\{x_1\cdots x_m\}$ and any $u$ such that $\frac{x_1\cdots x_m}{u}\notin\mathcal{M}_2\mathcal{M}_1^{\ell-1}$, we have
    $$2^{s}ev(u')\cdot ev(u)=0\bmod N,$$

    \noindent where $s$ is the smallest integer satisfying $u'\in\mathcal{M}_2\mathcal{M}_1^{s}$. Then for any $u$ such that $\frac{x_1\cdots x_m}{u},\frac{x_1\cdots x_m}{v}\notin\mathcal{M}_2\mathcal{M}_1^{\ell-1}$, we have $2^sev(u')\cdot\left(ev(u)\ominus ev(1)\right)=0\bmod N$. Since $u(P_n)=1$ and $ev(x_1\cdots x_m)=e_n$, we have $$\mathrm{Span}_{\mathbb{Z}_N}\left(ev(u)\ominus ev(1)\ :\ u\notin\left\{\frac{x_1\cdots x_m}{w}\ :\ w\in\mathcal{M}_2\mathcal{M}_1^{\ell-1}\setminus\{x_1\cdots x_m\}\right\}\right)\subseteq H_N.$$

    The group $2H_{N/2}\otimes\mathbb{Z}_N$ is trivially included in $H_N$. The same argument from the proof of Theorem~\ref{thm.mainmon} about the lengths of these two codes provides the desired equality.
\end{proof}

A set of monomials $\mathcal M \subseteq \F_q[x_1, \dots, x_m]$ is closed under divisibility if $u \in \mathcal M$ and $v \mid u$ implies $v \in \mathcal M$, meaning $\mathcal M$ contains every factor of each of its elements. An important family of monomial codes is the set of decreasing codes, which are codes generated by monomials closed under divisibility. This family includes polar, Reed-Solomon, and Reed-Muller codes, for example. For these monomial codes, the transversal gates are determined by the set of maximal elements generating the code, as the following result (analogous to the one for dual codes in \cite{decreasing}) proves. 

\begin{corollary}
Let $\mathcal{M}_2\subseteq\mathcal{M}_1\subseteq\mathbb{F}_2[x_1,\ldots,x_m]$ be two sets of square-free monomials closed under divisibility. Let $\mathcal{B}_i$ be the maximal elements (with respect to the divisibility order) of $\mathcal{M}_i$, $i=1,2$, and define $C_i=\mathrm{Span}_{\mathbb{Z}_2}\{ev(u)\ : u\in\mathcal{M}_i\}$. Assume that $x_1\cdots x_m \notin \mathcal{M}_2\mathcal{M}_1^{\ell-1}$, and let $\Delta$ be the set of square-free monomials that are not a multiple of any element of $\left\{\frac{x_1\cdots x_m}{u}\ :\ u\in\mathcal{B}_2\mathcal{B}_1^{\ell-1}\right\}$. We have that for $Q=Q(C_1,C_2)$,
    $$H_N=\mathrm{Span}_{\mathbb{Z}_N}\{ev(u)\ :\ u\in\Delta\}.$$
\end{corollary}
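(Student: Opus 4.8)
The plan is to derive this corollary from the preceding Proposition by showing that, for monomial sets closed under divisibility, the two descriptions of $H_N$ coincide. First I would unpack the combinatorics: since $\mathcal{M}_i$ is closed under divisibility and generated by its maximal elements $\mathcal{B}_i$, every element of $\mathcal{M}_2\mathcal{M}_1^{\ell-1}$ is a divisor of some element of $\mathcal{B}_2\mathcal{B}_1^{\ell-1}$ (here $\mathcal{B}_2\mathcal{B}_1^{\ell-1}$ should be read as $\{\overline{u_0 u_1\cdots u_s}: u_0\in\mathcal{B}_2, u_i\in\mathcal{B}_1, s\le \ell-1\}$, or rather the set of squarefree products of at most one element of $\mathcal{B}_2$ and $\le \ell-1$ elements of $\mathcal{B}_1$). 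Conversely every product of maximal elements lies in $\mathcal{M}_2\mathcal{M}_1^{\ell-1}$. So $\mathcal{M}_2\mathcal{M}_1^{\ell-1}$ is exactly the divisor-closure of $\mathcal{B}_2\mathcal{B}_1^{\ell-1}$, and in particular $\{\frac{x_1\cdots x_m}{u} : u\in\mathcal{B}_2\mathcal{B}_1^{\ell-1}\}$ generates (under "being a multiple of") the complement $\{\frac{x_1\cdots x_m}{w}: w\in\mathcal{M}_2\mathcal{M}_1^{\ell-1}\}$ — because $u\mid w$ iff $\frac{x_1\cdots x_m}{w}\mid\frac{x_1\cdots x_m}{u}$. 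Hence the set $\Delta$ of squarefree monomials not divisible by any $\frac{x_1\cdots x_m}{u}$ with $u\in\mathcal{B}_2\mathcal{B}_1^{\ell-1}$ is precisely the set of squarefree monomials not in $\{\frac{x_1\cdots x_m}{w}: w\in\mathcal{M}_2\mathcal{M}_1^{\ell-1}\}$.

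Next I would address the role of $x_1\cdots x_m$. The hypothesis $x_1\cdots x_m\notin\mathcal{M}_2\mathcal{M}_1^{\ell-1}$ means the constant monomial $1$ lies in $\Delta$, so $ev(1)=(1,\ldots,1)\in\mathrm{Span}_{\mathbb{Z}_N}\{ev(u): u\in\Delta\}$. This lets me absorb the awkward $ev(u)\ominus ev(1)$ terms from the Proposition: modulo the span of $ev(1)$, the generators $ev(u)\ominus ev(1)$ and $ev(u)$ generate the same $\mathbb{Z}_N$-module, provided $ev(1)$ itself is among the generators, which it is since $1\in\Delta$. I would also need to confirm the hypothesis of the Proposition holds, namely that $\ell$ is the minimum exponent with $x_1\cdots x_m\in\mathcal{M}_2\mathcal{M}_1^{\ell-1}$ and $|\mathcal{M}_2\mathcal{M}_1^{\ell-1}|\ne 2^m$ — here I would note that the corollary's hypothesis $x_1\cdots x_m\notin\mathcal{M}_2\mathcal{M}_1^{\ell-1}$ is slightly different, so I would instead run the argument directly from Theorem~\ref{thm.mainmon} (whose hypotheses match: $x_1\cdots x_m\notin\mathcal{M}_2\mathcal{M}_1^{\ell-1}$), using its first description $H_N=\mathrm{Span}_{\mathbb{Z}_N}(ev(x_1\cdots x_m/u): u\notin\mathcal{M}_2\mathcal{M}_1^{\ell-1})+2H_{N/2}\otimes\mathbb{Z}_N$ and then inducting on $\ell$ to collapse the $2H_{N/2}$ term.

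Then the induction: by Theorem~\ref{thm.mainmon}, $H_N=S+2H_{N/2}\otimes\mathbb{Z}_N$ where $S=\mathrm{Span}_{\mathbb{Z}_N}(ev(x_1\cdots x_m/u): u\notin\mathcal{M}_2\mathcal{M}_1^{\ell-1})=\mathrm{Span}_{\mathbb{Z}_N}\{ev(v): v\in\Delta\}$ by the first paragraph. By the inductive hypothesis applied with $N/2$ (i.e. $\ell-1$ in place of $\ell$, noting $\ell-1\ge $ the relevant threshold since $x_1\cdots x_m\notin\mathcal{M}_2\mathcal{M}_1^{\ell-1}\supseteq\mathcal{M}_2\mathcal{M}_1^{\ell-2}$), $H_{N/2}=\mathrm{Span}_{\mathbb{Z}_{N/2}}\{ev(u): u\in\Delta'\}$ for $\Delta'$ the set of squarefree monomials not a multiple of any element of $\{x_1\cdots x_m/u: u\in\mathcal{B}_2\mathcal{B}_1^{\ell-2}\}$. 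Since $\mathcal{B}_2\mathcal{B}_1^{\ell-2}\subseteq\mathcal{M}_2\mathcal{M}_1^{\ell-2}\subseteq\mathcal{M}_2\mathcal{M}_1^{\ell-1}$, each $x_1\cdots x_m/u$ with $u\in\mathcal{B}_2\mathcal{B}_1^{\ell-2}$ is a multiple of some $x_1\cdots x_m/u'$ with $u'\in\mathcal{B}_2\mathcal{B}_1^{\ell-1}$ (taking $u'$ a divisor-maximal element of $\mathcal{B}_2\mathcal{B}_1^{\ell-1}$ above $u$), so $\Delta\subseteq\Delta'$; therefore $2H_{N/2}\otimes\mathbb{Z}_N\subseteq 2\,\mathrm{Span}_{\mathbb{Z}_N}\{ev(u):u\in\Delta'\}$ and I must check this is already contained in $S=\mathrm{Span}_{\mathbb{Z}_N}\{ev(u):u\in\Delta\}$. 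This containment — that multiplying the "$\ell-1$ level" generators by $2$ lands inside the $\mathbb{Z}_N$-span of the "$\ell$ level" generators — is the step I expect to be the main obstacle; I would handle it either by a direct length/rank count (mirroring the $\lambda$ computation in the proof of Theorem~\ref{thm.mainmon}, comparing $\lambda(H_N)$ with $\lambda(\mathrm{Span}_{\mathbb{Z}_N}\{ev(u):u\in\Delta\})$ via the formula $\lambda(M^{\perp_N})=\ell n-\lambda(M)$) or by observing that for $v\in\Delta'\setminus\Delta$, $2\,ev(v)$ can be rewritten via the evaluation-code relations (Equation~\eqref{eq.2tom}) as a $\mathbb{Z}_N$-combination of $ev(w)$ with $w\in\Delta$. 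The length-count route is cleanest since the needed formula is already established in the proof of Theorem~\ref{thm.mainmon}, so I would present that: show both sides have the same module length, and since one contains the other, they are equal.
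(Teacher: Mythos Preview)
Your combinatorial reduction in the first paragraph --- identifying $\Delta$ with $\{x_1\cdots x_m/u: u\notin\mathcal{M}_2\mathcal{M}_1^{\ell-1}\}$ via the order-reversing bijection $u\mapsto x_1\cdots x_m/u$ --- is correct and is exactly what the paper's two-line proof does. You then correctly isolate the remaining task: Theorem~\ref{thm.mainmon} gives $H_N = S + 2H_{N/2}\otimes\mathbb{Z}_N$ with $S=\mathrm{Span}_{\mathbb{Z}_N}\{ev(u):u\in\Delta\}$, so one must absorb the second summand into $S$.

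This is precisely where the argument breaks, and your proposed length count exposes rather than closes the gap. Since the $ev(u)$ for distinct square-free $u$ are $\mathbb{Z}_2$-independent, $S$ is a free $\mathbb{Z}_N$-module of rank $|\Delta|=2^m-|\mathcal{M}_2\mathcal{M}_1^{\ell-1}|$, so $\lambda(S)=\ell\bigl(2^m-|\mathcal{M}_2\mathcal{M}_1^{\ell-1}|\bigr)$. On the other hand the length formula in the proof of Theorem~\ref{thm.mainmon} gives $\lambda(H_N)=\ell\cdot 2^m-\sum_{i=0}^{\ell-1}|\mathcal{M}_2\mathcal{M}_1^{i}|$. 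These agree only when the chain $\mathcal{M}_2\subseteq\mathcal{M}_2\mathcal{M}_1\subseteq\cdots\subseteq\mathcal{M}_2\mathcal{M}_1^{\ell-1}$ is constant, which is essentially never the case under the standing hypothesis $\mathcal{M}_1^{\ell-1}\neq\mathcal{M}_1^{\ell-2}$. Concretely, for $C_2=RM(0,4)\subset C_1=RM(1,4)$ and $\ell=2$ one finds $|\Delta|=11$, $\lambda(S)=22$, but $\lambda(H_4)=26$; and indeed $2\,ev(x_1x_2x_3)\in H_4\setminus S$ (it satisfies all the defining congruences of $H_4$, yet vanishes at every point where all degree-$\le 2$ monomials in $\Delta$ vanish, forcing all coefficients to zero). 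So $2H_{N/2}\otimes\mathbb{Z}_N\not\subseteq S$ in general, and neither your length-count route nor the rewriting-via-\eqref{eq.2tom} route can succeed. The paper's proof simply writes ``the rest follows from Theorem~\ref{thm.mainmon}'' and never confronts this step; the stated corollary is missing the $+\,2H_{N/2}\otimes\mathbb{Z}_N$ term that Theorem~\ref{thm.mainmon} carries, so the gap lies in the statement itself rather than in your strategy.
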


\begin{proof}
    If $\mathcal{M}_i$ is decreasing for $i=1,2$, then $\left\{\frac{x_1\cdots x_m}{u}\ :\ u\in\mathcal{M}_2\mathcal{M}_1^{\ell-1}\right\}$ is closed under multiplication and the minimal elements are $\left\{\frac{x_1\cdots x_m}{u}\ :\ u\in\mathcal{B}_2\mathcal{B}_1^{\ell-1}\right\}$. The rest follows from Theorem~\ref{thm.mainmon}.
\end{proof}

When we have pairs of Reed-Muller codes, the result is even more transparent. Compare this result with~\cite[Theorem V.2]{Barg}.

\begin{corollary}
 Let $C_2=RM(q,m)\subseteq C_1=RM(r,m)$ be Reed-Muller codes, $\ell$ such that $q+(\ell-1)r\leq m-1$, and $N=2^\ell$. Then $\mathrm{H}_N$ is generated by $$\{ev(x^{\mathbf{i}})\ :\ \mathbf{i}\in\{0,1\}^m,\,\, |\mathbf{i}|\leq m-q-(\ell-1)r-1\}.$$
\end{corollary}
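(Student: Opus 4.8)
The plan is to derive this Reed--Muller corollary directly from the previous corollary on decreasing codes, since both $\mathcal{M}_2$ and $\mathcal{M}_1$ defining $RM(q,m)$ and $RM(r,m)$ are sets of square-free monomials closed under divisibility. Concretely, $\mathcal{M}_i$ consists of all square-free monomials of degree at most $q$ (resp.\ $r$), so the maximal elements $\mathcal{B}_2$ are the degree-$q$ square-free monomials and $\mathcal{B}_1$ the degree-$r$ ones. The first step is to identify the product set $\mathcal{B}_2\mathcal{B}_1^{\ell-1}$: multiplying one degree-$q$ square-free monomial with $\ell-1$ distinct degree-$r$ square-free monomials and taking the square-free part yields, after reduction, square-free monomials of degree at most $q+(\ell-1)r$. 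The hypothesis $q+(\ell-1)r\le m-1$ guarantees $x_1\cdots x_m\notin\mathcal{M}_2\mathcal{M}_1^{\ell-1}$, so the hypotheses of the previous corollary are met. One should also check the reverse inclusion: every square-free monomial of degree $\le q+(\ell-1)r$ does arise this way (split the variable set appropriately among the $\ell$ factors, padding with overlapping variables if needed since the square-free part absorbs repetitions), so $\mathcal{B}_2\mathcal{B}_1^{\ell-1}$ is exactly the set of square-free monomials of degree at most $\min\{m,\,q+(\ell-1)r\}=q+(\ell-1)r$.

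The second step is to compute the set $\left\{\tfrac{x_1\cdots x_m}{u}\ :\ u\in\mathcal{B}_2\mathcal{B}_1^{\ell-1}\right\}$: as $u$ ranges over square-free monomials of degree $\le q+(\ell-1)r$, the complementary monomial $\tfrac{x_1\cdots x_m}{u}$ ranges over square-free monomials of degree $\ge m-q-(\ell-1)r$. So $\Delta$, the square-free monomials that are not a multiple of any such complement, is precisely the set of square-free monomials whose degree is \emph{strictly less} than $m-q-(\ell-1)r$; a square-free monomial $v$ is a multiple of some square-free monomial of degree $m-q-(\ell-1)r$ exactly when $\deg v\ge m-q-(\ell-1)r$. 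Hence $\Delta=\{x^{\mathbf{i}}\ :\ \mathbf{i}\in\{0,1\}^m,\ |\mathbf{i}|\le m-q-(\ell-1)r-1\}$. Plugging into the previous corollary gives $H_N=\mathrm{Span}_{\mathbb{Z}_N}\{ev(x^{\mathbf{i}})\ :\ |\mathbf{i}|\le m-q-(\ell-1)r-1\}$, which is the claim (noting this coincides with $RM(m-q-(\ell-1)r-1,m)$ viewed over $\mathbb{Z}_N$).

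The step I expect to require the most care is the exact determination of $\mathcal{B}_2\mathcal{B}_1^{\ell-1}$ as a set of square-free monomials --- specifically showing the degree bound $q+(\ell-1)r$ is attained and not exceeded after taking square-free parts, and handling the edge cases where the $\ell-1$ factors from $\mathcal{B}_1$ must be \emph{distinct} (so one cannot trivially repeat a monomial) yet the overlaps among their variable supports still let us realize every target degree up to $q+(\ell-1)r$. When $r\ge 1$ and $m$ is large enough relative to $q+(\ell-1)r$ this distinctness is not an obstruction, but one should state a short combinatorial argument: to build a square-free monomial of degree $d\le q+(\ell-1)r$, choose a degree-$q$ factor covering $q$ of the variables and then $\ell-1$ degree-$r$ factors that between them cover the remaining $d-q$ variables while remaining pairwise distinct, which is possible since each may reuse already-covered variables. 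Once this identification is pinned down, the translation through the complement map $u\mapsto \tfrac{x_1\cdots x_m}{u}$ and the description of $\Delta$ is routine bookkeeping on degrees.
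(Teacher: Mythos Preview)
Your approach is correct and matches the paper's: this corollary is stated without proof as an immediate specialization of the preceding corollary on decreasing monomial codes, and you carry out that specialization correctly. One small note: the paper's definition of $\mathcal{M}_2\mathcal{M}_1^{\ell-1}$ does not require the factors $u_1,\dots,u_s$ to be distinct, so your distinctness worry is unnecessary; moreover, $\mathcal{B}_2\mathcal{B}_1^{\ell-1}$ need not contain \emph{every} square-free monomial of degree at most $q+(\ell-1)r$ (it may miss some low degrees), but since only its maximal elements---those of degree exactly $q+(\ell-1)r$, which you correctly show are all present---determine $\Delta$, your conclusion is unaffected.
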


\begin{example}\label{ex.dec}\rm
    Let $\mathcal{M}_1=\{1,x_1,x_2,x_3,x_4,x_1x_2\}\subseteq\mathbb{F}_2[x_1,x_2,x_3,x_4]$, $C_2=\{ev(0),ev(1)\}$ and $C_1$ the span of $\{ev(u) :u\in\mathcal{M}_1\}$. Let $\ell=3$ and observe that $\mathcal{M}_1$ is closed under divisibility with maximal elements, meaning those in the set $\{x_3,x_4,x_1x_2\}$. Thus $\mathcal{M}_1^{2}$ is closed under divisibility with maximals, meaning those monomials in the set $\{x_3x_4,x_1x_2x_3,x_1x_2x_4\}$, and after dividing $x_1x_2x_3x_4$ by each of the elements in the last set, we get $\{x_1x_2,x_4,x_3\}$. The set of monomials $\Delta$ that are not multiples of those in the previous set is $\{1,x_1,x_2\}$. Then
    $$H_8=\mathrm{Span}_{\mathbb{Z}_8}\{ev(1),ev(x_1),ev(x_2)\}.$$

    Similarly,
    $$H_4=\mathrm{Span}_{\mathbb{Z}_4}\{ev(u)\ :\ u\text{ is a monomial},\, \deg u=2,\, u\neq x_3x_4\}$$
    $$H_2=\mathrm{Span}_{\mathbb{Z}_2}\{ev(u)\ :\ u\text{ is a monomial},\, u\neq x_1x_2x_3x_4.\},$$

        \noindent and $H_2$ is naturally the set of logical operators arising from the Pauli group.
\end{example}


We can, in fact, determine the logical action with the same information we have used to compute the group $H_N$. 

\begin{proposition}
    Assume the same hypothesis as in Theorem \ref{thm.mainmon}. Let $\mathcal{M}_1\setminus\mathcal{M}_2=\{u_1,\ldots,u_K\}$, and let $u$ be a square-free monomial such that $ev(u)\in H_N$. Then
    $$U(ev(u))=\sum_{f\in\mathrm{Span}_{\mathbb{Z}_2}(\mathcal{M}_1\setminus\mathcal{M}_2)}\omega^{wt(ev(uf))}|ev(f)\rangle\langle ev(f)|.$$
\end{proposition}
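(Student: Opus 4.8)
The plan is to apply Lemma~\ref{Lemma.action} with the explicit encoding coming from the monomial description of the code, and then to identify the exponent $b \cdot (y_z + Bv)$ that appears there with a weight of an evaluation vector. Since $y_z = 0$ throughout this section, the logical action reduces to $\overline{U(ev(u))} = \sum_{v \in \mathbb{F}_2^K} \omega^{ev(u) \cdot (Bv)} |v\rangle\langle v|$, where the columns of $B$ are the representatives $ev(u_1), \dots, ev(u_K)$ of $C_1/C_2$. The natural encoding to fix is $w_i = ev(u_i)$, so that a logical basis vector indexed by $v \in \mathbb{F}_2^K$ corresponds to the class of $\sum_i v_i\, ev(u_i)$ in $C_1/C_2$. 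Writing $f = \sum_i v_i u_i \in \mathrm{Span}_{\mathbb{Z}_2}(\mathcal{M}_1 \setminus \mathcal{M}_2)$, the key point is that the sum $\sum_i v_i\, ev(u_i)$ taken in $\mathbb{F}_2^n$ equals $ev(f)$, since $ev$ is $\mathbb{Z}_2$-linear into $\mathbb{F}_2^n$; so re-indexing the sum by $f$ instead of $v$ gives $\overline{U(ev(u))} = \sum_f \omega^{ev(u) \cdot ev(f)} |ev(f)\rangle\langle ev(f)|$, where we abuse notation to write $|ev(f)\rangle$ for the logical basis vector indexed by the corresponding $v$.

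The remaining step is the identification $ev(u) \cdot ev(f) \equiv wt(ev(uf)) \pmod N$. First, by $\mathbb{Z}_2$-linearity, write $f = \sum_i v_i u_i$ and expand $ev(u) \cdot ev(f) = ev(u) \cdot \left(\sum_i v_i\, ev(u_i)\right)$; but this is a sum over $\mathbb{F}_2$ on the right, whereas the dot product is computed in $\mathbb{Z}$ (or $\mathbb{Z}_N$), so one cannot naively distribute. Instead I would argue pointwise: at coordinate $j$ corresponding to the point $P_j$, the vector $ev(f)$ has a $1$ exactly when $f(P_j) = 1$, and likewise for $ev(u)$; hence the $j$-th term of the $\mathbb{Z}$-valued dot product $ev(u) \cdot ev(f)$ is $1$ iff both $u(P_j) = 1$ and $f(P_j) = 1$, i.e. iff $(uf)(P_j) = 1$, where here $uf$ denotes the product in $\mathbb{F}_2[x_1,\dots,x_m]$ reduced appropriately (note $u$ square-free, and evaluation kills the distinction between $uf$ and $\overline{uf}$ since $P_j \in \mathbb{F}_2^m$ means $a^2 = a$). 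Summing over $j$ gives exactly $wt(ev(uf))$, and this is already an honest integer, so its value modulo $N$ is what enters the exponent of $\omega$. This yields the claimed formula.

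The one genuine subtlety — and the step I expect to require the most care — is keeping straight which additions are in $\mathbb{F}_2$ and which in $\mathbb{Z}$: the equality $ev\!\left(\sum_i v_i u_i\right) = \sum_i v_i\, ev(u_i)$ holds with the \emph{left} sum in $\mathbb{F}_2[x_1,\dots,x_m]$ (so modulo $2$) and the \emph{right} sum in $\mathbb{F}_2^n$ (componentwise modulo $2$), and this is fine because $ev$ is a ring/module homomorphism to $\mathbb{F}_2^n$; but the subsequent dot product $ev(u) \cdot ev(f)$ must then be read as a sum of products of the \emph{integer} representatives $\{0,1\}$, exactly as the paper's notation convention dictates, and it is at this level that the weight interpretation appears. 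Once the pointwise argument above is in place this is clean, so I would present it as: (i) fix the monomial encoding and invoke Lemma~\ref{Lemma.action} with $y_z = 0$; (ii) re-index the logical sum by $f \in \mathrm{Span}_{\mathbb{Z}_2}(\mathcal{M}_1\setminus\mathcal{M}_2)$ using $\mathbb{Z}_2$-linearity of $ev$ into $\mathbb{F}_2^n$; (iii) prove $ev(u)\cdot ev(f) = wt(ev(uf))$ coordinatewise; (iv) conclude. No step relies on the specific hypotheses of Theorem~\ref{thm.mainmon} beyond guaranteeing $ev(u) \in H_N$ so that the action is well-defined and diagonal, which is assumed.
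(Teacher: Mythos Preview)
Your proposal is correct and follows essentially the same route as the paper's proof: invoke Lemma~\ref{Lemma.action} with the monomial encoding $w_i=ev(u_i)$, identify $Bv$ with $ev(f)$ for $f=\sum_i v_i u_i$, and then recognize the $\mathbb{Z}_N$-dot product $ev(u)\cdot ev(f)$ as $wt(ev(uf))$. Your coordinatewise justification of that last identity and your explicit tracking of $\mathbb{F}_2$- versus $\mathbb{Z}$-additions are more detailed than the paper's one-line computation, but the argument is the same.
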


\begin{proof}
    By Lemma \ref{Lemma.action}, the action of $U(b)$ is determined by $b\begin{bmatrix} |&&|\\ w_1&\cdots&w_K\\|&&|\end{bmatrix} v$. In the case of monomial codes, $\begin{bmatrix} |&&|\\ w_1&\cdots&w_K\\|&&|\end{bmatrix} v$ corresponds to $ev(f)$ for $f=\sum_{i=1}^K v_i u_i$. Since $b=ev(u)$, then $b\cdot w= ev(u)\cdot ev(f)=wt(ev(uf))\bmod N$.
\end{proof}

We now return to general CSS codes since we have not yet described the logical action realized by our transversal gates. First, we introduce more notation.

\medskip

Let $\omega$ be an $N^{th}$ root of the unity where $N=2^\ell$, and for $a\in\mathbb{Z}_N$, let $U(a)=\mathrm{diag}(1,\omega^a)$ as before. Let $n$ be a positive integer and $J\subseteq [n]$. For $v\in\mathbb{Z}_2^n$, we write $v_J=\prod_{i\in J} v_i$. Observe that $v_J=1$ if and only if $v_i=1$ for all $i\in J$. A controlled $U(a)$ is an operator of the form
$$C_J\text{-}U(a):=\sum_{\substack{v\in\mathbb{Z}_2^n\\ v_J=0}}|v\rangle\langle v|+\omega^a\sum_{\substack{v\in\mathbb{Z}_2^n\\ v_J=1}}|v\rangle\langle v|.$$

It is a controlled $U(a)$ in the sense that the application of $U(a)$ to the qubit indexed by the maximum of $J$ is controlled by the subsystem of qubits indexed by the rest of the elements in $J$. Observe that if $J=\{j\}$, then $C_J\text{-}U(a)=U(ae_j)$.

\begin{proposition}
    Let $C_2\subseteq C_1\subseteq\mathbb{F}_2^n$ be two linear codes with bases $\beta_2\subseteq\beta_1$, respectively. Assume $\beta_1\setminus\beta_2=\{w_1,\ldots,w_K\}$ and $b\in H_N$. For $J=\{j_1,\ldots,j_h\} \subseteq [K]$, define 
    $$a_{J,b}=(-2)^{h-1}wt(b\star w_{j_1}\star\cdots\star w_{j_h})\bmod N.$$
Then, the logical action of $U(b)$ is given by
    $$\overline{U(b)}=\prod_{h=0}^n \prod_{\substack{J\subseteq [K]\\ |J|=h}} C_J\text{-}U(a_{J,b}).$$
\end{proposition}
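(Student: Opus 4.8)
The plan is to start from the exact logical action of a diagonal transversal gate given by Lemma~\ref{Lemma.action}, namely $\overline{U(b)}=\sum_{v\in\mathbb{F}_2^K}\omega^{b\cdot Bv}|v\rangle\langle v|$ (here $y_z=0$), where $B$ is the matrix with columns $w_1,\dots,w_K$. The right-hand side of the claimed identity is also a diagonal operator on $\mathbb{C}^{2^K}$, since each $C_J\text{-}U(a_{J,b})$ is diagonal in the computational basis. So it suffices to compare diagonal entries: for each $v\in\mathbb{F}_2^K$, I must show
$$b\cdot Bv \;\equiv\; \sum_{\substack{J\subseteq[K]}} a_{J,b}\, v_J \pmod N,$$
where $v_J=\prod_{j\in J}v_j$ picks out exactly those $J$ contained in $\mathrm{supp}(v)$, and by convention the $J=\emptyset$ term contributes $a_{\emptyset,b}=wt(b\star(1,\dots,1))=wt(b)\bmod N$ times $v_\emptyset=1$ — actually one should check the $h=0$ term is $0\bmod N$ or absorb it, I will address this normalization below.

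The core computation is the expansion of $b\cdot Bv = b\cdot\left(\sum_{i\in\mathrm{supp}(v)}w_i\right)$ where the inner sum is taken \emph{in $\mathbb{F}_2$}, then lifted to $\mathbb{Z}$. The key tool is Equation~\eqref{eq.2tom}, $v+w=v\oplus w\ominus 2v\star w$, applied iteratively (exactly as in the proof of Theorem~\ref{thm.main}): summing $w_{j_1}+\cdots+w_{j_r}$ over the support $S=\{j_1,\dots,j_r\}$ of $v$ in $\mathbb{F}_2$ and tracking the integer lift yields an inclusion–exclusion expansion
$$\bigoplus_{j\in S}w_j \;=\; \bigoplus_{\emptyset\neq J\subseteq S}(-2)^{|J|-1}\,\bigl(w_{j_1}\star\cdots\star w_{j_{|J|}}\bigr),$$
the standard identity that $\mathbf{1}_{A_1\cup\cdots\cup A_r}=\sum_{\emptyset\neq J}(-1)^{|J|-1}\mathbf{1}_{\cap_{j\in J}A_j}$ with the coordinatewise refinement that $\mathbf{1}_{A\cup B}=\mathbf{1}_A+\mathbf{1}_B-2\,\mathbf{1}_{A\cap B}$ in $\mathbb{Z}$. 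Dotting with $b$ and using $b\cdot(w_{j_1}\star\cdots\star w_{j_h})=wt(b\star w_{j_1}\star\cdots\star w_{j_h})$ when $b$ is a $0/1$ vector — but $b\in\mathbb{Z}_N^n$ need not be, so one should instead write $b\cdot(w_{j_1}\star\cdots\star w_{j_h})=\sum_{i}b_i (w_{j_1})_i\cdots(w_{j_h})_i$, which is exactly the quantity appearing in $a_{J,b}$ up to the factor $(-2)^{h-1}$; I should double-check whether the intended reading of $wt(\cdot)$ in the definition of $a_{J,b}$ is this weighted sum — it must be, for the statement to hold — so I would phrase it as $a_{J,b}=(-2)^{h-1}\sum_{i\in[n]} b_i\prod_{t}(w_{j_t})_i\bmod N$. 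Matching terms gives $b\cdot Bv\equiv\sum_{\emptyset\neq J\subseteq\mathrm{supp}(v)}a_{J,b}$, and since $v_J=1$ exactly when $J\subseteq\mathrm{supp}(v)$, this is the desired diagonal entry of $\prod_J C_J\text{-}U(a_{J,b})$.

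Two points need care, and I expect the normalization to be the main fiddly obstacle rather than any deep difficulty. First, the $J=\emptyset$ factor: the product in the statement runs over all $J\subseteq[K]$ including $J=\emptyset$, for which $C_\emptyset\text{-}U(a)=\omega^a I$ is a global phase; one must verify $a_{\emptyset,b}=wt(b)\equiv 0\bmod N$, which follows because $b\in H_N$ forces $b\cdot v\equiv0\bmod N$ for all $v\in C_2$ — in particular taking a codeword of $C_2$; more directly, Equation~\eqref{eq.i0} in the proof of Theorem~\ref{thm.main} only gives $b\cdot v=0$ for $v\in C_2$, not $wt(b)=0$, so actually the cleanest fix is to note $b\cdot B\cdot\mathbf{0}=0$ and absorb the empty term, or to restrict the outer product to $h\geq 1$; I will adopt whichever convention makes $\overline{U(b)}|\mathbf{0}\rangle=|\mathbf{0}\rangle$ consistent with $\omega^{b\cdot\mathbf{0}}=1$, i.e. drop $h=0$ or observe its phase is trivial. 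Second, one must confirm that the $(-2)^{h-1}$ coefficients, which a priori live in $\mathbb{Z}$, reduce correctly mod $N=2^\ell$ — this is automatic since everything is computed in $\mathbb{Z}$ and then reduced, and the higher-order terms ($h>\ell$) contribute multiples of $2^\ell$ and vanish, matching the observation in the proof of Theorem~\ref{thm.main} that the chain of equations terminates at level $\ell$. Assembling: equality of all $2^K$ diagonal entries gives equality of the operators, completing the proof.
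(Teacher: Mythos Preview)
Your proposal is correct and follows essentially the same approach as the paper: start from the diagonal form given by Lemma~\ref{Lemma.action}, expand the $\mathbb{F}_2$-sum $\sum_i v_i w_i$ into its integer lift via the inclusion--exclusion identity derived from Equation~\eqref{eq.2tom}, and identify each resulting term $\left(\prod_{i\in J}v_i\right)a_{J,b}$ with the diagonal entry of $C_J\text{-}U(a_{J,b})$. Your caveats about the $h=0$ term and the reading of $wt(\cdot)$ when $b$ is not binary are legitimate normalization points that the paper also glosses over, but they do not affect the substance of the argument.
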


\begin{proof}
    Let $v\in\mathbb{F}_2^K$. Recall that $|v\rangle_L=\left|C_2+\sum_{i=1}^K v_iw_i\right\rangle$ and
    $$U(b)|v\rangle_L=\omega^{b\cdot\left(\sum_{i=1}^K v_iw_i\right)}|v\rangle_L.$$

    We have $$\sum_{i=1}^K v_iw_i=\bigoplus_{h=0}^K \bigoplus_{\substack{J\subseteq [K]\\ |J|=h}}(-2)^{h-1}\left(\prod_{i\in J} v_i\right)\substack{\bigstar\\ {i\in J}}w_i,$$

    \noindent from which we get $$b\cdot\left(\sum_{i=1}^K v_iw_i\right)=\bigoplus_{h=0}^K \bigoplus_{\substack{J\subseteq [K]\\ |J|=h}}\left(\prod_{i\in J} v_i\right) a_{J,b}.$$

    Since the gate $\sum_{v\in\mathbb{Z}_2^K} \omega^{\left(\prod_{i\in J} v_i\right) a_{J,b}}|v\rangle\langle v|$ is in fact $C_J\text{-}U(a_{J,b})$, we have the conclusion.
\end{proof}



\begin{example}\rm
    We continue with Example \ref{ex.dec}. Let $w_i=ev(x_i)$, $1\leq i\leq 4$, and $w_5=ev(x_1x_2)$. Fix $\ell=3$. We have that the following gates implement the logical actions:
    $$\begin{array}{ccl}
    U(ev(1))&\mapsto& Z(e_5)\ C_{3,5}\text{-}Z\ C_{4,5}\text{-}Z\ C_{3,4,5}\text{-}Z,\\
    U(ev(x_1))&\mapsto& Z(01111)\ C_{2,3}\text{-}Z\ C_{3,5}\text{-}Z\ C_{2,4}\text{-}Z\ C_{3,4}\text{-}Z\ C_{4,5}\text{-}Z\ C_{3,4,5}\text{-}Z\ C_{2,3,4}\text{-}Z,\\
    U(ev(x_2))&\mapsto& Z(10111)\ C_{1,3}\text{-}Z\ C_{3,5}\text{-}Z\ C_{1,4}\text{-}Z\ C_{3,5}\text{-}Z\ C_{4,5}\text{-}Z\ C_{3,4,5}\text{-}Z\ C_{1,3,4}\text{-}Z.
    \end{array}$$
\end{example}

In some cases, it is of interest to compute transversal logical operators arising from transversal physical ones, even if it is just a logical identity. Our previous computations show that this case is straightforward to express.

\begin{proposition}\label{prop.trans}
Let $C_2\subseteq C_1\subseteq\mathbb{F}_2^n$ be two linear codes  with bases $\beta_2\subseteq\beta_1$, respectively, and $Q=Q(C_1,C_2)$ the corresponding CSS code. If $N=2^\ell$, then
$$T_N=\left(\beta_2\cup\bigcup_{i=2}^{\ell}\left(2^{i-1}\beta_1^{(i)}\right)\right)^{\perp_N}=H_N\cap \left((\beta_1\setminus\beta_2)\star\bigcup_{i=1}^{\ell-1}\left(2^i\beta_1^{(i)}\right)\right)^{\perp_N}$$
and     $$Id_N=\left(\bigcup_{i=1}^{\ell}\left(2^{i-1}\beta_1^{(i)}\right)\right)^{\perp_N}=T_N\cap (\beta_1\setminus\beta_2)^{\perp_N}.$$
\end{proposition}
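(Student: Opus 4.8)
The plan is to characterize $T_N$ and $Id_N$ by directly applying the definition of ``transversal logical'' to the logical action computed in Lemma~\ref{Lemma.action}, just as Theorem~\ref{thm.main} applied the defining equation of $H_N$. Recall that for $b \in H_N$, Lemma~\ref{Lemma.action} gives $\overline{U(b)} = \sum_{v \in \mathbb{F}_2^K} \omega^{b\cdot(Bv)}|v\rangle\langle v|$ (here $y_z = 0$), where the columns of $B$ are an encoding $w_1,\dots,w_K$ of $Q$, i.e.\ representatives spanning $C_1/C_2$; we may take $\beta_1 \setminus \beta_2 = \{w_1,\dots,w_K\}$. First I would observe that a diagonal operator $\sum_{v} \omega^{\varphi(v)}|v\rangle\langle v|$ is transversal precisely when the exponent function $\varphi\colon \mathbb{F}_2^K \to \mathbb{Z}_N$ is \emph{affine} in the entries of $v$ after lifting to $\mathbb{Z}_N$ — that is, a tensor product of single-qubit diagonal gates $\bigotimes_{i=1}^K \mathrm{diag}(1,\omega^{c_i})$ produces exactly the phase $\omega^{\sum_i c_i v_i}$, and conversely any such affine phase function arises this way (a global phase is $v = 0$, which is trivial since $\varphi(0) = b \cdot 0 = 0$). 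So $b \in T_N$ iff $v \mapsto b \cdot (Bv) \bmod N$ is linear in $v_1,\dots,v_K$.

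Next I would make this condition explicit using the expansion of $Bv = \sum_{i=1}^K v_i w_i$ via Equation~(\ref{eq.2tom}), exactly as in the ``to conclude'' paragraph of the proof of Theorem~\ref{thm.main}: writing $v$ with support $S \subseteq [K]$, one gets $Bv = \bigoplus_{h=1}^{|S|} \bigoplus_{J \subseteq S, |J| = h} (-2)^{h-1}\, \bigstar_{i \in J} w_i$, hence $b \cdot (Bv) = \bigoplus_{h \geq 1} \sum_{J \subseteq S, |J|=h} (-2)^{h-1}\, b \cdot (w_{j_1} \star \cdots \star w_{j_h}) \bmod N$ (compare the controlled-gate Proposition with $a_{J,b} = (-2)^{|J|-1} wt(b \star w_{j_1} \star \cdots \star w_{j_h})$). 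The degree-$1$ terms $\sum_{i \in S} b \cdot w_i$ are automatically linear in $v$; linearity of the whole sum is therefore equivalent to the vanishing (mod $N$) of every term of degree $h \geq 2$, i.e.\ $2^{h-1} b \cdot (w_{j_1} \star \cdots \star w_{j_h}) = 0 \bmod N$ for all distinct $w_{j_1},\dots,w_{j_h} \in \beta_1 \setminus \beta_2$, for $2 \leq h \leq \ell$ (for $h > \ell$ this is automatic since $N = 2^\ell$). Combined with $b \in H_N$ — equivalently, by Theorem~\ref{thm.main}, $b$ is orthogonal to $\beta_2 \star \bigcup_{i=0}^{\ell-1} 2^i \beta_1^{(i)}$, which in particular contains $\beta_2$ (the $i=0$ term) — this yields $T_N = \big(\beta_2 \cup \bigcup_{i=2}^{\ell} 2^{i-1}\beta_1^{(i)}\big)^{\perp_N}$. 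The second description, $T_N = H_N \cap \big((\beta_1\setminus\beta_2) \star \bigcup_{i=1}^{\ell-1} 2^i \beta_1^{(i)}\big)^{\perp_N}$, then follows by splitting the sets defining $H_N$ and $T_N$ according to whether the leading $\beta_2$-factor is present or whether all factors come from $\beta_1 \setminus \beta_2$, and noting that $2^i \big((\beta_1\setminus\beta_2)\star\beta_1^{(i)}\big) \subseteq 2^i\beta_1^{(i+1)}$ so the ``extra'' orthogonality conditions for $T_N$ beyond $H_N$ are exactly orthogonality to $(\beta_1 \setminus \beta_2) \star \bigcup_{i=1}^{\ell-1} 2^i\beta_1^{(i)}$; here I should be a little careful to reconcile index ranges (e.g.\ $\beta_1^{(i+1)}$ versus $\beta_1^{(i)}$, and the $i=0$ term $(\beta_1\setminus\beta_2)$ itself, which corresponds to the degree-$1$ linear part and is \emph{not} required to vanish for $T_N$).

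For $Id_N$, I would use that $b \in Id_N$ iff $\overline{U(b)}$ is the identity, i.e.\ $b \cdot (Bv) = 0 \bmod N$ for \emph{all} $v \in \mathbb{F}_2^K$, not merely affine. Running the same expansion, this forces every term of every degree $h \geq 1$ to vanish: $2^{h-1} b \cdot (w_{j_1} \star \cdots \star w_{j_h}) = 0 \bmod N$ for all distinct $w_{j_i} \in \beta_1 \setminus \beta_2$ and $1 \leq h \leq \ell$. (The case $v$ of weight one gives $b \cdot w_i = 0$; induction on weight via Equation~(\ref{eq.2tom}), exactly as in Theorem~\ref{thm.main}, gives the higher-degree conditions, and conversely these conditions make the full sum vanish.) Since a set of representatives $\beta_1 \setminus \beta_2$ together with $\beta_2$ forms a basis $\beta_1$ of $C_1$, the products $w_{j_1} \star \cdots \star w_{j_h}$ with $w_{j_i} \in \beta_1 \setminus \beta_2$ range over $(\beta_1 \setminus \beta_2)^{(h)}$, and one checks $Id_N = \big(\bigcup_{i=1}^{\ell} 2^{i-1}(\beta_1\setminus\beta_2)^{(i)}\big)^{\perp_N}$; but since $Id_N \subseteq T_N \subseteq H_N$ already imposes orthogonality to all of $\beta_2 \star \bigcup 2^i\beta_1^{(i)}$, and adding the $i=0$, $\beta_2$-free condition $(\beta_1\setminus\beta_2)^{\perp_N}$ upgrades $\beta_1^{(i)}$ to include the remaining products, one gets the stated $Id_N = \big(\bigcup_{i=1}^{\ell} 2^{i-1}\beta_1^{(i)}\big)^{\perp_N} = T_N \cap (\beta_1 \setminus \beta_2)^{\perp_N}$. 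I expect the main obstacle to be purely bookkeeping: verifying that the various unions of $\star$-products indexed by $(\beta_1\setminus\beta_2)$ versus $\beta_1$ and by the different powers of $2$ really do coincide (using $\beta_2 \subseteq \beta_1$, the relation $2 \cdot 2^i\beta_1^{(i)} \subseteq 2^{i+1}\beta_1^{(i+1)}$ is false in general but $2^i\big(\beta_1^{(i)}\big)$-type containments hold after multiplying by additional basis vectors), and that the degree-truncation at $\ell$ is harmless because $2^\ell = 0$ in $\mathbb{Z}_N$ — so I would set up the degree expansion once, cleanly, and then read off all four equalities from it rather than proving them separately.
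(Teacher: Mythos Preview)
Your proposal is correct and follows essentially the same approach as the paper: start from the logical action in Lemma~\ref{Lemma.action}, expand $b\cdot\sum_i v_iw_i$ via Equation~(\ref{eq.2tom}) into multilinear terms, and read off ``transversal logical'' as vanishing of all degree~$\geq 2$ coefficients and ``logical identity'' as vanishing of all degree~$\geq 1$ coefficients. The paper's proof is terser---it jumps directly from the condition on products of elements of $\beta_1\setminus\beta_2$ to the stated formulas involving $\beta_1^{(i)}$ with only the phrase ``Since $b\in H_N$, we have the conclusion''---whereas you explicitly flag and outline the bookkeeping needed to pass between $(\beta_1\setminus\beta_2)^{(h)}$ and $\beta_1^{(h)}$ using the $H_N$ conditions from Theorem~\ref{thm.main}; that extra care is warranted and not a deviation in method.
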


\begin{proof}
Let $b\in H_N$ and $\beta_1\setminus\beta_2=\{w_1,\ldots,w_K\}$. By Lemma \ref{Lemma.action}, we know that the logical action is determined by 
$$\omega^{b\cdot \sum_{i=1}^K v_iw_i}|v\rangle\langle v|,$$
where $v\in\mathbb{Z}_2^K$. If we want this to be a transversal operator, we need
$$\omega^{b\cdot \sum_{i=1}^K v_iw_i}=\omega^{b\cdot\bigoplus_{i=1}^K v_iw_i}\Leftrightarrow 2^{s-1} b\cdot w_{i_1}\star\cdots\star w_{i_s}=0\bmod N$$
for any $\{i_1,i_2,\ldots,i_s\}\subseteq [K]$ of cardinality $s\geq 2$, i.e., $b\in\left(\bigcup_{s=2}^{\ell}2^{s-1}\beta_1^{(s)}\right)^{\perp_N}$. Since $b\in H_N$, we have the conclusion.

If $b\in T_N$ and $U(b)$ acts as a logical identity, then $\omega^{b\cdot w}=1$ for any $w\in C_1$, from where we get the extra set of equations for our conclusion. 
\end{proof}

\begin{remark}\rm
We could have computed $H_N$ for any $Q(C_1,C_2,y_x,y_z)$. The equations defining $H_N$ and $T_N$ would be the same but multiplying componentwise by $t_0=(1,\ldots,1)-2y_z$. On the other hand, to compute $Id_N$ we would have needed to factorize $\omega^{b\cdot y_z}$ (which we have done by virtue of Corollary \ref{coro.factor}) or we would have needed to add a set of (not necessarily homogeneous) equations satisfying $b\cdot y_z=\sum_{i=1}^K b\cdot w_i$. Another approach would have been to not start with transversal gates with factors of the form $\mathrm{diag}(1,\omega)$ but $\mathrm{diag}(\omega_1,\omega_2)$ and to compute the groups without ignoring the global phase $\omega_2/\omega_1$. 
\end{remark}

If we require certain elements to be in $T_N$, we immediately impose some conditions on the code. This is a way to obtain a generalization of triorthogonal codes, and that explains why divisible codes have appeared in the literature of such codes (see, for instance, \cite{Haah-18, Hu-22}). Recall that a linear code $C$ is said to be a $D$-divisible code if and only if $D \mid wt(c)$ for all codewords $c \in C$.

\begin{corollary}
Let $C_2\subseteq C_1\subseteq\mathbb{F}_2^n$ be linear codes with bases $\beta_2\subseteq\beta_1$, respectively, and $Q=Q(C_1,C_2)$ the corresponding CSS code. Assume that $\beta_1\setminus\beta_2=\{w_1,\ldots,w_K\}$ and $N=2^\ell$.

     \begin{enumerate}
        \item $(1,\ldots,1)\in T_N$ if and only if $wt\left(u+\sum_{i=1}^K v_iw_i\right)=\sum_{i=1}^K v_i wt(w_i)\bmod N$ for any $u\in C_2$ and $v\in\mathbb{F}_2^K$.

        \item $(1,\ldots,1)\in T_N$ and the logical action is a transversal $U$ if and only if $wt\left(u+\sum_{i=1}^K v_iw_i\right)=\sum_{i=1}^K v_i\bmod N$ for any $u\in C_2$ and $v\in\mathbb{F}_2^K$.

        \item $(1,\ldots,1)\in Id_N$ if and only if $C_1$ is an $N$-divisible code.
     \end{enumerate}
\end{corollary}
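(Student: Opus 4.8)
The plan is to derive all three parts directly from Proposition~\ref{prop.trans}, which already gives the defining equations for $H_N$, $T_N$, and $Id_N$ in terms of the bases $\beta_1,\beta_2$; the work is simply to translate the condition ``$(1,\dots,1)$ satisfies these equations'' back into a weight statement via the identity $(1,\dots,1)\cdot x = wt(x) \bmod N$ for $x\in\mathbb{Z}_2^n$ lifted to $\mathbb{Z}_N^n$. First I would record the basic dictionary: for $v\in\mathbb{F}_2^n$ lifted to $\{0,1\}^n\subseteq\mathbb{Z}_N^n$, one has $(1,\dots,1)\cdot v = wt(v)$ as an integer, hence $=wt(v)\bmod N$; and more generally $(1,\dots,1)\cdot(w_{i_1}\star\cdots\star w_{i_s}) = wt(w_{i_1}\star\cdots\star w_{i_s})$, the size of the common support. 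Then I would recall Equation~(\ref{eq.2tom}) applied repeatedly (exactly as in the proof of Theorem~\ref{thm.main} and Proposition~\ref{prop.trans}), which expands $u+\sum_i v_i w_i$ in $\mathbb{Z}_N^n$ as an alternating sum $\bigoplus_h \bigoplus_{|J|=h} (-2)^{h-1}(\prod_{i\in J}v_i)\,(\bigstar_{i\in J} w_i)$ together with the $C_2$-part; taking the dot product with $(1,\dots,1)$ turns this into the signed sum of intersection-weights.

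For part~(1): by Proposition~\ref{prop.trans}, $(1,\dots,1)\in T_N$ iff $(1,\dots,1)\in\left(\beta_2\cup\bigcup_{i=2}^\ell 2^{i-1}\beta_1^{(i)}\right)^{\perp_N}$, i.e. iff $wt(u)=0\bmod N$ for all $u\in\beta_2$ (which combined with~(\ref{eq.2tom}) gives $wt(u)\equiv 0$ for all $u\in C_2$) and $2^{s-1}wt(w_{i_1}\star\cdots\star w_{i_s})=0\bmod N$ for all $2\le s\le\ell$ and distinct indices. I would then observe that these are precisely the conditions under which, taking the dot product of $(1,\dots,1)$ with the expansion of $u+\sum_i v_i w_i$, every ``mixed'' term of order $\ge 2$ and every $C_2$-term vanishes mod $N$, leaving exactly $\sum_i v_i\, wt(w_i)\bmod N$; conversely, specializing $v$ to be supported on a single index, a pair, etc., and $u$ to range over $\beta_2$, recovers each defining equation. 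This bi-implication is the content of~(1). Part~(2) is the same argument but starting from $Id_N=T_N\cap(\beta_1\setminus\beta_2)^{\perp_N}$ intersected appropriately — actually the relevant statement is ``$(1,\dots,1)\in T_N$ \emph{and} the logical action is transversal'', which by the proof of Proposition~\ref{prop.trans} means additionally $wt(w_i)\equiv 1\bmod N$ for each $i$ (so that each $C_{\{i\}}$-$U$ factor is $U(1)$ up to the order); substituting $wt(w_i)\equiv 1$ into the formula from part~(1) yields $\sum_i v_i\bmod N$.

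For part~(3): $(1,\dots,1)\in Id_N$ iff $(1,\dots,1)\in\left(\bigcup_{i=1}^\ell 2^{i-1}\beta_1^{(i)}\right)^{\perp_N}$, i.e. $wt(w_i)\equiv 0\bmod N$ for all $i$ and $2^{s-1}wt(w_{i_1}\star\cdots\star w_{i_s})\equiv 0\bmod N$ for all $2\le s\le\ell$. Expanding an arbitrary codeword $c=\sum_{u\in S}u$ (for $S\subseteq\beta_1$) via~(\ref{eq.2tom}) and pairing with $(1,\dots,1)$ shows $wt(c)$ is an $N$-linear combination of these intersection-weights, hence $\equiv 0\bmod N$; so $Id_N\ni(1,\dots,1)$ forces $C_1$ to be $N$-divisible. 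Conversely, if $C_1$ is $N$-divisible then in particular $wt(w)\equiv 0$ for all $w\in C_1$, and the intersection-weight conditions follow by inclusion-exclusion on weights of sums of basis elements (each $2^{s-1}wt(\bigstar_{j}w_{i_j})$ appears, up to sign and lower-order divisible terms, inside $wt(w_{i_1}+\cdots+w_{i_s})$, which is divisible by $N$), so $(1,\dots,1)\in Id_N$.

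The main obstacle is the converse direction in part~(3) (and the analogous ``conversely'' directions in (1)--(2)): one must check that the full list of mod-$N$ equations indexed by subsets $J$ really is \emph{equivalent} to the single weight-divisibility condition on $C_1$, not merely implied by it. This is an inclusion-exclusion bookkeeping argument — writing $wt\!\left(\sum_{j=1}^s w_{i_j}\right)$ as an alternating sum $\sum_{\emptyset\ne J\subseteq\{i_1,\dots,i_s\}}(-2)^{|J|-1}wt\!\left(\bigstar_{j\in J}w_j\right)$ and inducting on $s$ to peel off the top term $2^{s-1}wt(\bigstar w_{i_j})$ using that all strictly smaller intersection-weights are already known divisible — so I would set it up as an induction on $s$, with the base case $s=1$ being $wt(w_i)\equiv 0\bmod N$ directly from divisibility, and I would be careful that the expansion~(\ref{eq.2tom}) iterated over $s$ terms is exactly the alternating sum used, so no sign or coefficient is lost.
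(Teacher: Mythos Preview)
Your proposal is correct and follows essentially the same route as the paper: translate ``$(1,\dots,1)$ satisfies the defining equations of $T_N$ (resp.\ $Id_N$) from Proposition~\ref{prop.trans}'' into weight conditions via $(1,\dots,1)\cdot v = wt(v)$, using the expansion~(\ref{eq.2tom}). You are in fact more thorough than the paper on the converse directions---the paper's proof of~(3) simply asserts that the intersection-weight congruences are ``equivalent to $C_1$ being an $N$-divisible code,'' whereas you correctly identify and outline the needed inclusion-exclusion induction on~$s$.
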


\begin{proof}
    \begin{enumerate}
        \item Since $b=(1,\ldots,1)\in H_N$,  $$wt\left(u+\sum_{i=1}^K v_iw_i\right)\bmod N=b\cdot\left(u+\sum_{i=1}^K v_iw_i\right)=b\cdot \sum_{i=1}^K v_iw_i.$$ Since $(1,\dots,1)\in T_N$, we also have
        $$b\cdot\sum_{i=1}^K v_iw_i=b\cdot\bigoplus_{i=1}^K v_iw_i=\sum_{i=1}^K v_iwt(w_i)\bmod N.$$

        \item If $b\in T_N$, then the logical action is $\overline{U(wt(w_1),\ldots,wt(w_K))}$ which is $\overline{U^{\otimes K}}$ just when $wt(w_i)=1\bmod N$ for $1\leq i\leq K$.

        \item If $b\in Id_N$, then any element in $\beta_1^{(i)}=0\bmod 2^{\ell-i+1}$, which is equivalent to $C_1$ being an $N$-divisible code.
    \end{enumerate}
\end{proof}

\begin{remark}\rm
Compare the characteristic equations defining $T_N$ with the definition of triorthogonal codes for $\ell=3$ \cite{Bravi-12}. For transversal logical $T$ without Clifford correction, compare the second and third parts of the corollary with \cite[Theorem 4, Theorem 3]{Rengaswamy-20} respectively. The last corollary shows that divisible codes naturally provide transversal gates, which have been studied extensively, for example, \cite{Haah-18, Hu-22}. 
\end{remark}

The choice of a basis of the codes does not affect the group $T_N$, but it affects the corresponding logical action in the expected way.

\begin{lemma}
Let $C_2\subseteq C_1\subseteq\mathbb{F}_2^n$ be linear codes with bases $\beta_2\subseteq\beta_1$ and $\beta_2'\subseteq\beta'_1$, respectively, and $Q=Q(C_1,C_2)$ the corresponding CSS code. Assume that $N=2^\ell$, $\beta_1\setminus\beta_2=\{w_1,\ldots,w_K\}$, and $\beta'_1\setminus\beta'_2=\{w'_1,\ldots,w'_K\}$.     Let $L\cong L'\cong (\mathbb{C}^2)^{\otimes K}$ be the space of logicals of $Q(C_1,C_2)$ with the encodings associated to $\beta_1$ and $\beta_1'$ respectively and let $T_N$ and $T'_N$ the subgroups of $H_N$ that represents transversal logicals for each encoding. Then $T_N=T'_N$. If $M$ is the matrix change of basis from $\beta_1$ to $\beta'_1$, then for $b\in T_N$, if the logical action of $U(b)$ is $\overline{U(c)}$, $c\in\mathbb{Z}_N^K$ over $L$, then the corresponding logical action over $L'$ is $\overline{U(cM)}$.    
\end{lemma}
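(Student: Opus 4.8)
The plan is to view the two logical systems as the same underlying orthonormal set $\{|C_2+w\rangle:[w]\in C_1/C_2\}$ of Lemma~\ref{lem.propcss}, coordinatized by the two encodings $\mathcal{E},\mathcal{E}'$ via different labelings of the computational basis of $(\mathbb{C}^2)^{\otimes K}$. Since $\{[w_i]\}_{i=1}^K$ and $\{[w'_j]\}_{j=1}^K$ are both bases of $C_1/C_2$, there is a unique $M\in\mathrm{GL}_K(\mathbb{F}_2)$ with $[w'_j]=\sum_{i=1}^K M_{ij}[w_i]$; concretely $w'_j=\big(\sum_i M_{ij}w_i\big)+u_j$ in $\mathbb{F}_2^n$ for some $u_j\in C_2$, and the change of encoding $\mathcal{E}'^{-1}\circ\mathcal{E}$ is the permutation operator realizing on computational basis states the $\mathbb{F}_2$-linear coordinate change attached to $M$. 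Since $\overline{U}^{\,L'}=(\mathcal{E}'^{-1}\circ\mathcal{E})\,\overline{U}^{\,L}\,(\mathcal{E}'^{-1}\circ\mathcal{E})^{-1}$ for any $U$ fixing $Q$, the statement is entirely about how conjugating $\overline{U(b)}^{\,L}$ by this permutation behaves.

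Next I would compute the diagonal entries of $\overline{U(b)}^{\,L'}$ directly from Lemma~\ref{Lemma.action} (taking $y_z=0$): the entry at $v'\in\mathbb{F}_2^K$ is $\omega^{\,b\cdot(\sum_k v'_k w'_k)}$, the inner sum computed in $\mathbb{F}_2^n$. Substituting $w'_k=\big(\sum_i M_{ik}w_i\big)+u_k$ and iterating Equation~\eqref{eq.2tom}, the $\mathbb{Z}$-expansion of $\sum_k v'_k w'_k$ breaks into monomials that are either star-products of two or more \emph{distinct} $w_i$'s carried by a power $2^{s-1}$ with $s\ge 2$, or monomials containing one of the $u_k\in C_2$. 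Pairing with $b$ and reducing mod $N$, the first kind vanishes because $b\in T_N$ satisfies the defining equations $2^{s-1}b\cdot(w_{i_1}\star\cdots\star w_{i_s})\equiv 0$ of Proposition~\ref{prop.trans}, and the second kind vanishes because $b\in H_N$ (adding a $C_2$-part leaves $b\cdot(-)$ unchanged mod $N$, exactly as in the proof of Theorem~\ref{thm.main}). Specializing to $v'=e'_j$ already gives $c'_j:=b\cdot w'_j\equiv\sum_i M_{ij}c_i=(cM)_j\pmod N$, the asserted transformation law; for general $v'$ what remains is the entry $\omega^{\,\sum_i (Mv'\bmod 2)_i\,c_i}$.

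The step I expect to be the real obstacle is then inferring that $\overline{U(b)}^{\,L'}$ is again a tensor product of single-qubit diagonal gates---that is, $b\in T'_N$, hence $T_N=T'_N$ by symmetry, and the logical action is $\overline{U(cM)}$. The subtlety is that $(Mv'\bmod 2)_i$ is the mod-$2$ reduction of the \emph{integer} combination $\sum_k M_{ik}v'_k$, so rewriting $\sum_i (Mv'\bmod 2)_i\,c_i$ as $\sum_k v'_k(cM)_k$ introduces correction terms $2\sum_i c_i\big\lfloor(\sum_k M_{ik}v'_k)/2\big\rfloor$, which are quadratic and of higher degree in $v'$, and one must show that they vanish mod $N$. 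I would attack this by factoring $M$ into elementary matrices over $\mathbb{F}_2$ and processing one factor at a time: for a transposition of two coordinates the change of encoding is a genuine relabeling of the tensor factors and there is nothing to check, whereas for an elementary addition one must show that the correction it produces is annihilated modulo $N$ by the same equations defining $T_N$ that were used above. Granting that, chaining the elementary steps delivers $T_N\subseteq T'_N$ together with $\overline{U(b)}^{\,L'}=\overline{U(cM)}$ for an arbitrary change of basis, and running the same computation with $\beta_1,\beta_1'$ interchanged gives the reverse inclusion.
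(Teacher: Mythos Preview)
You have put your finger precisely on the obstruction, and your caution there is warranted: the step you flag---showing that the correction terms $2\sum_i c_i\lfloor(\sum_k M_{ik}v'_k)/2\rfloor$ vanish modulo $N$ from the $T_N$-equations---cannot be completed, because the assertion $T_N=T'_N$ is false as stated.

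Take $n=2$, $N=4$, $C_2=\{0\}$, $C_1=\mathbb{F}_2^2$, with $\beta_1=\{e_1,e_2\}$ and $\beta'_1=\{e_1,e_1+e_2\}$ (so $\beta_2=\beta'_2=\emptyset$). Here $H_N=\mathbb{Z}_4^2$. For $b=(1,0)$ the logical action over $L$ is $U(1)\otimes I$, so $b\in T_N$. Over $L'$ the logical states indexed by $v'=(0,0),(1,0),(0,1),(1,1)$ are $|00\rangle,|10\rangle,|11\rangle,|01\rangle$, and the logical action of $U(b)$ has diagonal entries $1,i,i,1$; this is not a tensor product of single-qubit diagonals since $i\cdot i\neq 1$. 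Hence $b\notin T'_N$. Equivalently, in the formula of Proposition~\ref{prop.trans} one has $\beta_1^{(2)}=\{e_1\star e_2\}=\{0\}$ whereas $(\beta'_1)^{(2)}=\{e_1\star(e_1+e_2)\}=\{e_1\}$, so the defining constraints genuinely differ. This is also consistent with the paper's own earlier observation that ``$T_N$ depends on the fixed encoding.''

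The paper's proof elides exactly the point you isolated: it passes from $\overline{U(c)}|v\rangle_L$ to $\overline{U(cM)}|v'\rangle_{L'}$, which amounts to asserting $\sum_i c_i v_i\equiv\sum_j(cM)_jv'_j\pmod N$ for every $v'$. In your elementary-matrix picture, for the column addition $w'_2=w_1+w_2$ the correction at $v'=(1,1)$ is $2c_1=2\,b\cdot w_1$, and the $T_N$-equations only constrain $2\,b\cdot(w_1\star w_2)$, not $2\,b\cdot w_1$.

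What does survive is the second assertion under the additional hypothesis $b\in T_N\cap T'_N$: then $c'_j=b\cdot w'_j\equiv\sum_i M_{ij}(b\cdot w_i)=(cM)_j\pmod N$, exactly by your computation at $v'=e_j$. So the transformation law for the logical action is correct once transversality is known in both encodings; it is the basis-independence of $T_N$ itself that fails.
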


\begin{proof}
    For any $v\in\mathbb{F}_2^K$ and $v'=M^{-1}v$,

    $$w=\begin{bmatrix} |&&|\\ w_1&\cdots&w_K\\|&&|\end{bmatrix} v=\begin{bmatrix} |&&|\\w'_1&\cdots&w'_K\\|&&|\end{bmatrix} v'\bmod 2.$$

    \noindent Thus, if $c=b\begin{bmatrix} |&&|\\ w_1&\cdots&w_K\\|&&|\end{bmatrix}$, then
    \begin{align*}
    U(b)|C_2+w\rangle=\omega^{b\cdot w}|C_2+w\rangle&=\overline{U\left(c\right)}|v\rangle_{L}\\ &=\overline{U\left(cM\right)}|v'\rangle_{L'},
    \end{align*}
    and the result follows.
\end{proof}

\begin{remark}\rm
In the search for CSS codes with certain transversal gates, we may seek:

\begin{itemize}
    \item Codes with a transversal gate realizing a specific logical action. This is the case for triorthogonal codes, and, according to the previous result, it can be achieved by finding a specific basis that realizes this action.

    \item Codes with a transversal gate fixing the codespace. This is the more general case of CSS-T codes. Since such a result is basis-independent, the papers on such codes discuss the structural properties of the square and its duals.
\end{itemize}


\end{remark}

Once again, for general linear codes, the information required to compute these groups is equivalent to having the whole set of codewords of one of the codes. For highly structured codes, this is still computable.

\begin{theorem}\label{thm.tdmon}
    Let $m\geq 2$ be an integer and let $\mathbb{F}_2^m=\{P_1,\ldots,P_n\}$. Let $\mathcal{M}_2\subseteq\mathcal{M}_1\subseteq\mathbb{F}_2[x_1,\ldots,x_m]$ be two sets of square-free monomials of cardinalities $k_2<k_1$, respectively. Assume that for $\ell\geq 1$, 
    $x_1\cdots x_m\notin \mathcal{M}_2\cup(\mathcal{M}_1^{\ell}\setminus\mathcal{M}_1)$ and 
    $\mathcal{M}_1^{\ell}\neq\mathcal{M}_1^{\ell-1}$. For $i=1,2$, define  $C_i=\mathrm{Span}_{\mathbb{Z}_2}\{ev(u)\ :\ u\in\mathcal{M}_i\}$. Then, for the CSS code $Q=Q(C_1,C_2)$ and $N=2^\ell$, we have
    \begin{align*}
    T_N/Id_N&=Id_N+\mathrm{Span}_{\mathbb{Z}_N}\left(ev\left(\frac{x_1\cdots x_m}{u}\right)\ :\ u\in\mathcal{M}_1\setminus\mathcal{M}_2\right)\\
    \text{ and } \quad Id_N&=\mathrm{Span}_{\mathbb{Z}_N}\left(ev\left(\frac{x_1\cdots x_m}{u}\right)\ :\ u\notin \mathcal{M}_1^{\ell}\right)
    \end{align*}
\end{theorem}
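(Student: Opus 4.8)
The plan is to follow the same strategy as in the proof of Theorem~\ref{thm.mainmon}, but starting from the linear-form descriptions of $T_N$ and $Id_N$ supplied by Proposition~\ref{prop.trans} rather than from that of $H_N$. Since distinct square-free monomials evaluate to linearly independent vectors in $\mathbb{F}_2^n$, the sets $\beta_j=\{ev(u):u\in\mathcal{M}_j\}$ are bases of $C_j$, and Proposition~\ref{prop.trans} reads
$$T_N=\Bigl(\beta_2\cup\bigcup_{i=2}^{\ell}2^{i-1}\beta_1^{(i)}\Bigr)^{\perp_N},\qquad Id_N=\Bigl(\bigcup_{i=1}^{\ell}2^{i-1}\beta_1^{(i)}\Bigr)^{\perp_N}.$$
The task thus reduces to computing these two kernels in $\mathbb{Z}_N^n$. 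The first displayed identity in the theorem is the statement that $T_N=Id_N+\mathrm{Span}_{\mathbb{Z}_N}\bigl(ev((x_1\cdots x_m)/u):u\in\mathcal{M}_1\setminus\mathcal{M}_2\bigr)$, equivalently that $T_N/Id_N$ is generated by the images of those vectors, so it follows once both kernels are identified.

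For the inclusions ``$\supseteq$'' I would adapt the weight computation from Theorem~\ref{thm.mainmon}. For square-free monomials one has $ev(w_1)\star\cdots\star ev(w_s)=ev(\overline{w_1\cdots w_s})$ and $ev(w)\cdot ev(v)=wt(ev(\overline{wv}))=2^{\,m-\deg\overline{wv}}$, so
$$2^{s-1}\bigl(ev(w_1)\star\cdots\star ev(w_s)\bigr)\cdot ev(v)\equiv 2^{\,s-1+m-\deg\overline{w_1\cdots w_s v}}\pmod{N},$$
which vanishes as soon as $\deg\overline{w_1\cdots w_s v}\le m-\ell+s-1$. Writing $v=(x_1\cdots x_m)/u$, the variables absent from $\overline{w_1\cdots w_s v}$ are exactly those of $u/\gcd(u,\overline{w_1\cdots w_s})$, so the condition becomes ``$u$ keeps at least $\ell-s+1$ of its variables away from every product of $s$ of the chosen generators of $C_1$''. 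I would check that the hypotheses $x_1\cdots x_m\notin\mathcal{M}_2\cup(\mathcal{M}_1^{\ell}\setminus\mathcal{M}_1)$ and $\mathcal{M}_1^{\ell}\neq\mathcal{M}_1^{\ell-1}$ force this for each generator occurring on the right-hand sides of the theorem: $x_1\cdots x_m$ is never produced by the products in question, and whenever such a product has degree close to $m$ its level $s$ is correspondingly large, supplying the missing powers of $2$. For $T_N$ one only asks for orthogonality to $\beta_2$ and to $\beta_1^{(i)}$ with $i\ge2$, not to $\beta_1\setminus\beta_2$, which is precisely the slack admitting the extra generators $ev\bigl((x_1\cdots x_m)/u\bigr)$ with $u\in\mathcal{M}_1\setminus\mathcal{M}_2$.

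Equality I would then get by a length count, following the last step of the proof of Theorem~\ref{thm.mainmon}. Using $\lambda(M^{\perp_N})=\ell n-\lambda(M)$ for a submodule $M\subseteq\mathbb{Z}_N^n$, I would compute $\lambda$ of the generating sets $\bigcup_{i=1}^{\ell}2^{i-1}\beta_1^{(i)}$ and $\beta_2\cup\bigcup_{i=2}^{\ell}2^{i-1}\beta_1^{(i)}$ by stratifying $\mathcal{M}_1^{i}$ according to the least $i$ at which each monomial is produced (the telescoping being clean because $\mathcal{M}_1^{i-1}\subsetneq\mathcal{M}_1^{i}$ for every $i\le\ell$, which is $\mathcal{M}_1^{\ell}\neq\mathcal{M}_1^{\ell-1}$ propagated downward), then compute $\lambda$ of the claimed $\mathbb{Z}_N$-spans and check that the two numbers match; combined with ``$\supseteq$'' this forces the equalities, and comparing the two length formulas reads off $T_N/Id_N$.

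The step I expect to be the real obstacle is this length bookkeeping over $\mathbb{Z}_N=\mathbb{Z}_{2^{\ell}}$, which is not a field: an evaluation vector $ev(v)$ enters $Id_N$ (or $T_N$) only after multiplication by a power of $2$ dictated by how deep $x_1\cdots x_m/v$ sits in the chain $\mathcal{M}_1\subseteq\mathcal{M}_1^{2}\subseteq\cdots\subseteq\mathcal{M}_1^{\ell}$, and one must assemble these $2$-adic levels into the module length while at the same time controlling the counts $|\mathcal{M}_1^{i}\setminus\mathcal{M}_1^{i-1}|$ and the saturation of the proposed generating sets. This is also the point where the slightly unusual hypothesis $x_1\cdots x_m\notin\mathcal{M}_2\cup(\mathcal{M}_1^{\ell}\setminus\mathcal{M}_1)$ genuinely enters.
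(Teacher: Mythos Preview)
Your proposal is correct and is exactly the approach the paper takes: the paper's proof is simply ``Analogous to the one of Theorem~\ref{thm.mainmon},'' and you have spelled out precisely that analogy, replacing the $H_N$ description from Theorem~\ref{thm.main} by the $T_N$ and $Id_N$ descriptions from Proposition~\ref{prop.trans} and then rerunning the weight/inclusion and length-count arguments. The bookkeeping you flag as the delicate point is the same bookkeeping already carried out in the proof of Theorem~\ref{thm.mainmon}, so no new idea is required.
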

\begin{proof}
    Analogous to the one of Theorem~\ref{thm.mainmon}.
\end{proof}

\begin{corollary}\label{coro.transdec}
    Consider the hypothesis of Theorem~\ref{thm.tdmon}. Let $\mathcal{M}_1\setminus\mathcal{M}_2=\{u_1,\ldots,u_K\}$ and let $b=ev(u)\in T_N$, where $u$ is a square-free monomial. Then the logical action of $U(b)$ is $\overline{U(c)}$ where $c_i=2^{m-\deg\overline{uu_i}}$, $1\leq i\leq K$.
\end{corollary}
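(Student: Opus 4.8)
The plan is to read off the logical action from Lemma~\ref{Lemma.action}, use the hypothesis $b\in T_N$ (via Proposition~\ref{prop.trans}) to split the resulting phase into a product over the $K$ logical qubits, and then convert the inner products $ev(u)\cdot ev(u_i)$ into monomial degrees by a weight count. Concretely, write $w_i=ev(u_i)$ and $B=[\,w_1\mid\cdots\mid w_K\,]$. Since $y_z=0$, Lemma~\ref{Lemma.action} gives that the logical action of $U(b)$ is $\overline{U(b)}=\sum_{v\in\mathbb{F}_2^K}\omega^{b\cdot Bv}|v\rangle\langle v|$, where $Bv$ is computed over $\mathbb{F}_2$ and then lifted to $\{0,1\}^n\subseteq\mathbb{Z}_N^n$.

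The first step is to observe that for $b\in T_N$ this phase is additive in $v$. By Proposition~\ref{prop.trans} we have $b\in\bigl(\bigcup_{s=2}^{\ell}2^{s-1}\beta_1^{(s)}\bigr)^{\perp_N}$, so $2^{s-1}\,b\cdot(w_{i_1}\star\cdots\star w_{i_s})=0\bmod N$ for every $\{i_1,\dots,i_s\}\subseteq[K]$ with $s\ge 2$; expanding $\sum_i v_iw_i$ over $\mathbb{Z}$ by means of Equation~(\ref{eq.2tom}) then yields $b\cdot Bv\equiv\sum_{i=1}^K v_i\,(b\cdot w_i)\pmod N$. Hence $\overline{U(b)}=\sum_{v}\omega^{\sum_i v_i(b\cdot w_i)}|v\rangle\langle v|=\bigotimes_{i=1}^K\mathrm{diag}(1,\omega^{b\cdot w_i})=\overline{U(c)}$ with $c_i\equiv b\cdot w_i\pmod N$. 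It remains to evaluate $c_i=ev(u)\cdot ev(u_i)$: both vectors are $\{0,1\}$-valued with $j$-th entries $u(P_j)$ and $u_i(P_j)$, so their integer inner product counts the points $P_j\in\mathbb{F}_2^m$ at which $(u\,u_i)(P_j)=1$; since every coordinate of $P_j$ lies in $\mathbb{F}_2$, repeated variables collapse and $(u\,u_i)(P_j)=\overline{u\,u_i}(P_j)$, so $c_i=wt\bigl(ev(\overline{u\,u_i})\bigr)$. Finally, for any square-free monomial $v$ in $m$ variables one has $wt(ev(v))=2^{m-\deg v}$ (the $\deg v$ distinguished coordinates are pinned to $1$, the remaining $m-\deg v$ are free), so $c_i=2^{m-\deg\overline{u\,u_i}}\bmod N$, and $0\le\deg\overline{u\,u_i}\le m$ makes the exponent a well-defined nonnegative integer. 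This is exactly the claimed formula.

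I do not expect a genuine obstacle: once transversality is invoked, the statement is just the monomial-code specialization of Lemma~\ref{Lemma.action}, and the weight-counting identities are the same ones already used in the proof of Theorem~\ref{thm.mainmon}. The only point needing care is the passage from the basis-dependent expression $\overline{U(b)}=\sum_v\omega^{b\cdot Bv}|v\rangle\langle v|$ to an honest tensor product $\overline{U(c)}$; this is precisely where the hypothesis $b\in T_N$ (rather than merely $b\in H_N$) is used, through Proposition~\ref{prop.trans}, since $ev$ is not additive over $\mathbb{Z}$ and without transversality the exponent $b\cdot Bv$ need not equal $\sum_i v_i\,(b\cdot w_i)$.
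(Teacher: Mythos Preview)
The proposal is correct and follows exactly the route the paper leaves implicit: the corollary is stated without proof, and your argument simply assembles the ingredients already laid out---Lemma~\ref{Lemma.action} for the logical action, the characterization of $T_N$ in Proposition~\ref{prop.trans} to collapse the phase $b\cdot Bv$ to $\sum_i v_i(b\cdot w_i)$, and the weight identity $wt(ev(v))=2^{m-\deg v}$ already used in the proof of Theorem~\ref{thm.mainmon}. There is nothing to add; your write-up is precisely the intended proof.
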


\begin{corollary}
    Let $\mathcal{M}_2\subseteq\mathcal{M}_1\subseteq\mathbb{F}_2[x_1,\ldots,x_m]$ be families of square-free monomials closed by divisibility. Let $\mathcal{B}_i$ be the maximal elements (with respect to the divisibility order) of $\mathcal{M}_i$, $i=1,2$, and define $C_i=\mathrm{Span}_{\mathbb{Z}_2}\{ev(u)\ : u\in\mathcal{M}_i\}$. Assume that $\mathcal{M}_1^{\ell}$ does not contain $x_1\cdots x_m$ and let $\Delta$ be the square-free monomials that are not a multiple of $\left\{\frac{x_1\cdots x_m}{u}\ :\ u\in\mathcal{B}_1^{\ell}\right\}$. We have that for $Q=Q(C_1,C_2)$,
    $$Id_N=\mathrm{Span}_{\mathbb{Z}_N}\left\{ev(u)\ :\ u\in\Delta\right\}$$
\end{corollary}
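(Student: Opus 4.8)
The plan is to derive this corollary from Theorem~\ref{thm.tdmon} in exactly the same way that the previous ``decreasing'' corollary (the one computing $H_N$ for decreasing monomial codes) was derived from Theorem~\ref{thm.mainmon}. The only input we need is the second formula of Theorem~\ref{thm.tdmon}, namely $Id_N=\mathrm{Span}_{\mathbb{Z}_N}\left(ev\left(\frac{x_1\cdots x_m}{u}\right)\ :\ u\notin \mathcal{M}_1^{\ell}\right)$, together with a description of the generating set $\left\{\frac{x_1\cdots x_m}{u}\ :\ u\in\mathcal{M}_1^{\ell}\right\}$ when $\mathcal{M}_1$ is closed under divisibility.

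First I would observe that since $\mathcal{M}_1$ is closed under divisibility, its $\ell$-th square-free product $\mathcal{M}_1^{\ell}$ is again closed under divisibility; indeed, if $w\in\mathcal{M}_1^{\ell}$ then $w=\overline{u_1\cdots u_s}$ with $s\le\ell$ and $u_i\in\mathcal{M}_1$, and any divisor of $w$ is obtained by replacing each $u_i$ by a divisor of $u_i$ (still in $\mathcal{M}_1$ by decreasingness) and discarding some, which keeps it in $\mathcal{M}_1^{\ell}$. Moreover the maximal elements of $\mathcal{M}_1^{\ell}$ are obtained from the maximal elements $\mathcal{B}_1$ of $\mathcal{M}_1$, i.e.\ $\mathcal{M}_1^{\ell}$ has the same span-support as $\mathcal{B}_1^{\ell}$ — this is the statement that a decreasing monomial set is determined by its maximal elements, and it transfers to products. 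Applying the order-reversing bijection $u\mapsto \frac{x_1\cdots x_m}{u}$ turns ``closed under divisibility'' into ``closed under multiplication'' (i.e.\ an up-set), and turns the set of maximal elements $\mathcal{B}_1^{\ell}$ of $\mathcal{M}_1^{\ell}$ into the set of minimal elements $\left\{\frac{x_1\cdots x_m}{u}\ :\ u\in\mathcal{B}_1^{\ell}\right\}$ of the complementary up-set. Hence a square-free monomial $v$ satisfies $v\notin\left\{\frac{x_1\cdots x_m}{u}\ :\ u\in\mathcal{M}_1^{\ell}\right\}$ exactly when $v$ is \emph{not} a multiple of any element of $\left\{\frac{x_1\cdots x_m}{u}\ :\ u\in\mathcal{B}_1^{\ell}\right\}$, which is precisely the definition of $\Delta$.

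Rewriting the index set in Theorem~\ref{thm.tdmon}: $u\notin\mathcal{M}_1^{\ell}$ is equivalent (again via the bijection, now applied to $u$ itself, writing $u=\frac{x_1\cdots x_m}{v}$) to $v\in\Delta$. Therefore $Id_N=\mathrm{Span}_{\mathbb{Z}_N}\left(ev(v)\ :\ v\in\Delta\right)$, which is the claimed identity. I would also note in passing that the hypothesis ``$\mathcal{M}_1^{\ell}$ does not contain $x_1\cdots x_m$'' is exactly the hypothesis $x_1\cdots x_m\notin\mathcal{M}_2\cup(\mathcal{M}_1^{\ell}\setminus\mathcal{M}_1)$ of Theorem~\ref{thm.tdmon} in the decreasing case (since $\mathcal{M}_2\subseteq\mathcal{M}_1\subseteq\mathcal{M}_1^\ell$, all three sets not containing $x_1\cdots x_m$ collapses to $\mathcal{M}_1^\ell$ not containing it), and that $\mathcal{M}_1^{\ell}\ne\mathcal{M}_1^{\ell-1}$ can be assumed without loss since otherwise the statement is vacuous or reduces to a smaller $\ell$; I would mention this only briefly. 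The main (and essentially only) obstacle is the combinatorial lemma that products of decreasing monomial sets are decreasing and are controlled by the maximal elements of the factors — but this is routine poset bookkeeping and is the direct analogue of the result for dual codes cited earlier from \cite{decreasing}, so the proof is short: ``Analogous to the proof of the corresponding corollary for $H_N$, using Theorem~\ref{thm.tdmon} in place of Theorem~\ref{thm.mainmon}.''
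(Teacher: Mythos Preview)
Your proposal is correct and matches the paper's approach: the paper omits the proof of this corollary, but its proof of the parallel corollary for $H_N$ proceeds exactly as you describe (decreasingness of $\mathcal{M}_1$ gives decreasingness of $\mathcal{M}_1^{\ell}$, the complementary set under $u\mapsto x_1\cdots x_m/u$ is an up-set with minimal elements coming from $\mathcal{B}_1^{\ell}$, and then Theorem~\ref{thm.tdmon} applies). Your hypothesis check and the remark about $\mathcal{M}_1^{\ell}\ne\mathcal{M}_1^{\ell-1}$ are also in line with how the paper handles these points.
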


\begin{corollary}\label{coro.transrm}
 Let $C_2=RM(q,m)\subseteq C_1=RM(r,m)$ be Reed-Muller codes, $\ell$ such that $\ell r\leq m-1$, and $N=2^\ell$. Then 
$$Id_N=\mathrm{Span}_{\mathbb{Z}_N}\{ev(x^{\mathbf{i}})\ :\ \mathbf{i}\in\{0,1\}^m,\,\, |\mathbf{i}|\leq m-\ell r-1\}$$
\end{corollary}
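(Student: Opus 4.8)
The plan is to derive this as a direct specialization of Theorem~\ref{thm.tdmon} (or rather the decreasing-code corollary immediately preceding it) to the Reed-Muller setting. Recall that $RM(r,m)$ is the monomial code generated by $\mathcal{M}_1=\{x^{\mathbf{i}}\ :\ \mathbf{i}\in\{0,1\}^m,\ |\mathbf{i}|\leq r\}$, the set of square-free monomials of degree at most $r$; this is manifestly closed under divisibility, with maximal elements $\mathcal{B}_1=\{x^{\mathbf{i}}\ :\ |\mathbf{i}|=r\}$. Similarly $\mathcal{M}_2$ corresponds to $RM(q,m)$ with $\mathcal{B}_2=\{x^{\mathbf{i}}\ :\ |\mathbf{i}|=q\}$. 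So the first step is simply to record these identifications and to check the hypothesis of the corollary: we need $x_1\cdots x_m\notin\mathcal{M}_1^{\ell}$, which holds precisely because $\mathcal{M}_1^{\ell}$ consists of square-free monomials of degree at most $\ell r$, and $\ell r\leq m-1<m=\deg(x_1\cdots x_m)$ by hypothesis.

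Next I would compute the set $\Delta$ explicitly. Since $\mathcal{M}_1^{\ell}$ is the set of square-free monomials of degree $\leq \ell r$ (note $\mathcal{M}_1^\ell$ reduces repeated variables to exponent one, and since every generator has all exponents in $\{0,1\}$ the product of $\ell$ of them spans all square-free monomials up to degree $\ell r$, using $\ell r\leq m-1$ so no degree constraint is lost), its minimal elements under divisibility are the monomials of degree exactly $\ell r$ — equivalently $\mathcal{B}_1^\ell$ as used in the corollary. Dividing $x_1\cdots x_m$ by a square-free monomial of degree $\ell r$ yields a square-free monomial of degree $m-\ell r$, and as we range over all such divisors we obtain every square-free monomial of degree $m-\ell r$. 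Hence $\left\{\frac{x_1\cdots x_m}{u}\ :\ u\in\mathcal{B}_1^\ell\right\}$ is exactly the set of square-free monomials of degree $m-\ell r$. A square-free monomial is then \emph{not} a multiple of any of these iff its degree is strictly less than $m-\ell r$, i.e. $\Delta=\{x^{\mathbf{i}}\ :\ \mathbf{i}\in\{0,1\}^m,\ |\mathbf{i}|\leq m-\ell r-1\}$. Plugging this $\Delta$ into the preceding corollary gives $Id_N=\mathrm{Span}_{\mathbb{Z}_N}\{ev(x^{\mathbf{i}})\ :\ |\mathbf{i}|\leq m-\ell r-1\}$, which is the claim.

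The only genuinely delicate point is the description of $\mathcal{M}_1^\ell$: one must verify that the componentwise product of monomial evaluations behaves as claimed, namely that $ev(u)\star ev(v)=ev(\overline{uv})$ over $\mathbb{F}_2^m$ (true since coordinates are in $\{0,1\}$ and $0^2=0$, $1^2=1$), so that $\mathcal{M}_1^\ell$ really is generated as a \emph{set} by the reductions $\overline{u_1\cdots u_\ell}$ with $u_i\in\mathcal{M}_1$, and that this set is all of $\{$square-free monomials of degree $\leq\ell r\}$ — surjectivity onto degree $\leq \ell r$ holds because any square-free monomial of degree $d\leq \ell r$ can be split into $\ell$ square-free factors each of degree $\leq r$ (possible since $d\leq \ell r$), and the constant $1$ pads out unused factors. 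One should also confirm $\mathcal{M}_1^\ell\neq\mathcal{M}_1^{\ell-1}$, i.e. $\ell r > (\ell-1)r$, which is immediate as $r\geq 1$ (if $r=0$ the statement is vacuous or trivial). Everything else is bookkeeping, so I would keep the write-up to a few lines invoking the decreasing-code corollary. The main obstacle, such as it is, is making the $\mathcal{M}_1^\ell$ computation airtight without belaboring it.
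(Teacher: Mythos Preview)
Your proposal is correct and follows exactly the approach the paper intends: the corollary is stated without proof because it is the immediate specialization of the preceding decreasing-code corollary (and Theorem~\ref{thm.tdmon}) to Reed--Muller codes, and your computation of $\mathcal{M}_1^\ell$ and $\Delta$ is precisely that specialization. The only cosmetic slip is that $\mathcal{B}_1^\ell$ in the paper's notation contains monomials of all degrees between $r$ and $\ell r$ (when the factors overlap), not just degree $\ell r$; but since only the maximal $u$ matter for determining $\Delta$, your conclusion is unaffected.
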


\section{Conclusion} 
\label{sec:conclusion}

In this paper, we considered the problem of computing the set of diagonal transversal gates that fix a CSS code. We determined the logical actions of the gates and the groups of transversal logical and logical identities. We determined explicitly the set of equations defining the group. We computed the full set of transversal stabilizers and transversal gates for any CSS code arising from monomial codes, thereby recovering and extending results in the literature on CSS-T codes, triorthogonal codes, and divisible codes. In future work, we will consider extensions beyond monomial codes, such as algebraic geometry codes, and the implications of these results for magic state distillation or other methods to achieve universal fault-tolerant quantum computing.

\section{Acknowledgements}
 The National Science Foundation partially supported the first (DMS-2201075 and DMS-2401558), second  (DMS-2401558 and 2502705), third (DMS-2201075 and 2502705), fourth (CCF-2106189), and fifth (DMS-2401558) authors. The first three authors and the last author were partially supported by the Commonwealth Cyber Initiative. The fifth author was partially supported by Grant PID2022-138906NB-C21 funded by MICIU/AEI/10.13039/501100011033 and by ERDF/EU. Initial discussions for this collaboration took place while the third and fourth authors were visiting the Simons Institute for the Theory of Computing in Spring 2024.

\bibliographystyle{alpha}
\bibliography{References}
\end{document}